\settrimmedsize{\stockheight}{\stockwidth}{*}
\DeclareMathAlphabet{\mathcal}{OMS}{cmsy}{m}{n}
\newcommand{\Var}{\mathrm{Var}}
\newcommand{\Cov}{\mathrm{Cov}}
\newcommand{\Cor}{\mathrm{Cor}}
\rmmail\url{rmori@sys.i.kyoto-u.ac.jp}
\ttmail\url{tt@i.kyoto-u.ac.jp}
\title{New Generalizations of the Bethe Approximation via Asymptotic Expansion}
\author{
Ryuhei~Mori~~and~~Toshiyuki Tanaka
\thanks{
Graduate School of Informatics,
Kyoto University,
Kyoto, 606--8501, Japan
(email: \rmmail,\; \ttmail).
The work of R. Mori was supported by the Grant-in-Aid for Scientific Research for JSPS Fellows (22$\cdot$5936).}%
}
\date{}
\theoremstyle{plain}
\newtheorem{theorem}{Theorem}
\newtheorem{lemma}[theorem]{Lemma}
\theoremstyle{definition}
\newtheorem{definition}[theorem]{Definition}
\newtheorem{example}[theorem]{Example}
\theoremstyle{remark}
\newtheorem{remark}[theorem]{Remark}
\begin{document}
\maketitle
\thispagestyle{myheadings}

\begin{abstract}
The Bethe approximation, discovered in statistical physics, gives an efficient algorithm
called belief propagation (BP) for approximating a partition function.
BP empirically gives an accurate approximation for many problems, e.g., low-density parity-check codes, compressed sensing, etc.
Recently, Vontobel gives a novel characterization of the Bethe approximation using graph cover.
In this paper, a new approximation based on the Bethe approximation is proposed.
The new approximation is derived from Vontobel's characterization using graph cover,
and expressed by using the edge zeta function,
which is related with the Hessian of the Bethe free energy as shown by Watanabe and Fukumizu.
On some conditions, it is proved that the new approximation is asymptotically better than the Bethe approximation.
\end{abstract}
\begin{center}
\vspace{-3mm}
Keywords
\end{center}
\vspace{-4mm}
{\small
Bethe approximation, belief propagation, partition function, graphical model, edge zeta function, asymptotic expansion.
}

\section{Introduction}
Calculation of partition functions is one of the central problems in statistical physics.
The Bethe approximation is an empirically successful efficient approximation
for many problems~\cite{mezard2009information}.
A few results about asymptotic exactness of the Bethe approximation are recently known.
Recently, Vontobel shows a novel characterization of the Bethe approximation using graph cover~\cite{vontobel2010counting}.
In this paper, a series of generalizations of the Bethe approximation is shown from Vontobel's characterization of the Bethe free energy
via asymptotic expansion.

\section{Factor graph and the Bethe free energy}
Let $\mathcal{X}$ be a finite set.
A factor graph is a bipartite graph consisting of variable nodes and factor nodes.
A factor graph including $N$ variable nodes defines a probability measure on $\mathcal{X}^N$.
The set of variable nodes, the set of factor nodes and the set of edges are denoted by $V$, $F$ and $E\subseteq V\times F$, respectively.
The neighborhoods of a factor node $a\in F$ and a variable node $i\in V$ are denoted by $\partial a \subseteq V$ and $\partial i \subseteq F$, respectively.
The degrees of variable node $i$ and factor node $a$ are denoted by $d_i$ and $d_a$, respectively.
For each factor node $a$, there exists a corresponding non-negative function $f_a: \mathcal{X}^{d_a}\to \mathbb{R}_{\ge 0}$.
The probability mass function defined by the factor graph is
\begin{align*}
p(\bm{x}) &= \frac1Z \prod_{a\in F} f_a(\bm{x}_{\partial a}),&
Z &:= \sum_{\bm{x}\in\mathcal{X}^N} \prod_{a\in F} f_a(\bm{x}_{\partial a}).
\end{align*}
The constant $Z$ for the normalization is called the {partition function}.
The calculation of $Z$ is generally a \#P-complete problem.
Approximating $Z$ is one of the central problems of statistical physics and information theory~\cite{mezard2009information}.

In the following, the Bethe free energy is defined using the variational method~\cite{yedidia2005constructing}.
Let $p(\bm{x})$ be a probability mass function defined by a factor graph.
From the following equation for any probability measure $q(\bm{x})$ on $\mathcal{X}^N$,
\begin{align*}
&D(q\| p) := \sum_{\bm{x}\in\mathcal{X}^N} q(\bm{x})\log \frac{q(\bm{x})}{p(\bm{x})}\\
&\quad = \log Z - \sum_{a\in F} \sum_{\bm{x}\in\mathcal{X}^N} q(\bm{x})\log f_a(\bm{x}_{\partial a})
+ \sum_{\bm{x}\in\mathcal{X}^N} q(\bm{x})\log q(\bm{x})\\
&\quad =: \log Z + \mathcal{U}_\mathrm{Gibbs}(q) - \mathcal{H}_\mathrm{Gibbs}(q)
=: \log Z + F_\mathrm{Gibbs}(q)
\end{align*}
one obtains the equation $-\log Z = F_\mathrm{Gibbs}(p) = \min_{q\in\mathcal{P}(\mathcal{X}^N)} F_\mathrm{Gibbs}(q)$
where $\mathcal{P}(\mathcal{X}^N)$ denotes the set of probability measures on $\mathcal{X}^N$.
The Bethe free energy, defined in the following, is an approximation of the Gibbs free energy, which is $F_\mathrm{Gibbs}(q)$
in the above equation.
\begin{definition}[Bethe free energy]
For $\{b_i\in\mathcal{P}(\mathcal{X})\}_{i\in V}$ and $\{b_a\in\mathcal{P}(\mathcal{X}^{d_a})\}_{a\in F}$ satisfying the reducibility 
\begin{equation}\label{eq:marginal}
\sum_{\bm{x}_{\partial a} \in\mathcal{X}^{d_a}, x_i = z} b_a(\bm{x}_{\partial a}) = b_i(z),\quad \forall (i,a)\in E
\end{equation}
the Bethe free energy is defined as
\begin{align*}
F_\mathrm{Bethe}(\{b_i\}, \{b_a\}) &:= \mathcal{U}_\mathrm{Bethe}(\{b_i\}, \{b_a\}) - \mathcal{H}_\mathrm{Bethe}(\{b_i\}, \{b_a\})
\end{align*}
where
\begin{align*}
\mathcal{U}_\mathrm{Bethe}(\{b_i\}, \{b_a\})
&:= -\sum_{a\in F} \sum_{\bm{x}_{\partial a}\in\mathcal{X}^{d_a}} b_a(\bm{x}_{\partial a})\log f_a(\bm{x}_{\partial a})\\
\mathcal{H}_\mathrm{Bethe}(\{b_i\}, \{b_a\})
&:= -\sum_{a\in F} \sum_{\bm{x}_{\partial a}\in\mathcal{X}^{d_a}} b_a(\bm{x}_{\partial a})\log b_a(\bm{x}_{\partial a})\\
&\quad +\sum_{i\in V} (d_i-1)\sum_{x_{i}\in\mathcal{X}} b_i(x_i)\log b_a(x_i).\qed
\end{align*}
\end{definition}
Since the Bethe free energy is considered as an approximation of the Gibbs free energy,
the minimum of the Bethe free energy is regarded as an approximation of $-\log Z$.
The stationary condition of the Bethe free energy is
\begin{equation}
\begin{split}
b_a(\bm{x}) &= \frac1{Z_a(\{m_{i\to a}\}_{i\in\partial a})} f_a(\bm{x})\prod_{i\in\partial a} m_{i\to a}(x_i)\\
b_i(x) &= \frac1{Z_i(\{m_{a\to i}\}_{a\in\partial i})} \prod_{a\in\partial i} m_{a\to i}(x)\\
&= \frac1{Z_{i,a}(m_{a\to i}, m_{i\to a})} m_{a\to i}(x)m_{i\to a}(x),\quad \forall a\in\partial i.
\end{split}
\label{eq:biba}
\end{equation}
where 
$\{(m_{a\to i}, m_{i\to a})\in\mathcal{P}(\mathcal{X})^2\}_{(i,a)\in E}$ satisfies
the belief propagation (BP) equations
\begin{equation}
\begin{split}
m_{i\to a}(x) &\propto \prod_{b\in\partial i\setminus a} m_{b\to i}(x)\\
m_{a\to i}(x) &\propto \sum_{\bm{x}\in\mathcal{X}^{d_a}, x_i=x}f_a(\bm{x})\prod_{j\in\partial a\setminus i} m_{j\to a}(x_j)
\end{split}
\label{eq:bp}
\end{equation}
and where
$Z_a(\{m_{i\to a}\}_{i\in\partial a})$, $Z_i(\{m_{a\to i}\}_{a\in\partial i})$ and $Z_{i,a}(m_{a\to i}, m_{i\to a})$
are the constants for the normalizations
\begin{align*}
Z_a(\{m_{i\to a}\}_{i\in\partial a}) &:= \sum_{\bm{x}\in\mathcal{X}^{d_a}}f_a(\bm{x})\prod_{i\in\partial a} m_{i\to a}(x_i)\\
Z_i(\{m_{a\to i}\}_{a\in\partial i}) &:= \sum_{x\in\mathcal{X}} \prod_{a\in\partial i} m_{a\to i}(x)\\
Z_{i,a}(m_{a\to i}, m_{i\to a}) &:= \sum_{x\in\mathcal{X}} m_{a\to i}(x)m_{i\to a}(x).
\end{align*}
By substituting~\eqref{eq:biba} to the definition of the Bethe free energy and using~\eqref{eq:bp},
one obtains the representation
\begin{align}
&F_\mathrm{Bethe}(\{m_{i\to a}\}, \{m_{a\to i}\})
=
-\sum_{i\in V} \log Z_i(\{m_{a\to i}\}_{a\in\partial i})
\label{eq:Bethe}\\
&\quad -\sum_{a\in F} \log Z_a(\{m_{i\to a}\}_{i\in\partial a})
 +\sum_{(i,a)\in E} \log Z_{i,a}(m_{i\to a}, m_{a\to i}).\nonumber
\end{align}
Note that the stationary condition of~\eqref{eq:Bethe} is also given by the BP equations~\eqref{eq:bp}.
Since the Bethe free energy is an approximation of $-\log Z$, the Bethe partition function is defined as
$Z_\mathrm{Bethe}(\{m_{i\to a}\}, \{m_{a\to i}\}) := \exp\{-F_\mathrm{Bethe}(\{m_{i\to a}\}, \{m_{a\to i}\})\}$
i.e.,
\begin{align*}
&Z_\mathrm{Bethe}(\{m_{i\to a}\}, \{m_{a\to i}\}) = 
\prod_{i\in V} Z_i(\{m_{a\to i}\}_{a\in\partial i})\\
&\quad\cdot\prod_{a\in F} Z_a(\{m_{i\to a}\}_{i\in\partial a})
\prod_{(i,a)\in E} \frac1{Z_{i,a}(m_{i\to a}, m_{a\to i})}
\end{align*}
for $\{(m_{a\to i}, m_{i\to a})\}_{(i,a)\in E}$ satisfying~\eqref{eq:bp}.
We also use the notation $Z_\mathrm{Bethe}:= \min_{\{m_{i\to a}\}, \{m_{a\to i}\}} Z_\mathrm{Bethe}(\{m_{i\to a}\}, \{m_{a\to i}\})$.
We assume in this paper that all variable nodes have the same alphabet $\mathcal{X}$ 
only for the simplicity.
All results in this paper can be generalized to the case that each variable node $i\in V$ has distinct alphabet $\mathcal{X}_i$.

\section{Exponential family}
In this section, exponential family of probability distributions is introduced.
The exponential family is a class of parametric family of distributions.
A family of probability distributions having a parameter $\bm{\theta}\in\Theta\subseteq\mathbb{R}^d$ for some $d\in\mathbb{N}$
is called a parametric family of distributions.
In this paper, the existence of the first and the second derivatives of probability density (mass) function $p(x\mid \bm{\theta})$
with respect to the parameter is assumed.
\begin{definition}[Fisher information matrix]
For a parametric family of distributions,
the Fisher information matrix is defined for each parameter $\theta\in\Theta$ as
\begin{equation*}
\mathcal{J}(\bm{\theta})_{k,l} := 
\mathbb{E}\bigg[\frac{\partial \log p(X\mid \bm{\theta})}{\partial \theta_k}\frac{\partial \log p(X\mid \bm{\theta})}{\partial \theta_l}\bigg]
\end{equation*}
where $X\sim p(X\mid \bm{\theta})$ and where ${B}_{k,l}$ denotes the $(k,\;l)$-element of a matrix $B$.
\end{definition}

\begin{definition}[Exponential family]
The exponential family is a parametric family of probability distributions
whose probability density (mass) functions can be expressed in the form
\begin{equation*}
p(x\mid \bm{\theta})=
\exp\bigg\{\sum_{k=1}^d t_k(x)\theta_k - \psi(\bm{\theta})\bigg\}
\end{equation*}
using a set of functions $\{t_k:\mathcal{X}\to\mathbb{R}\}_{k=1,2,\dotsc,d}$ called a sufficient statistic.
Here,
\begin{equation*}
\psi(\bm{\theta}) := \log \sum_{x\in\mathcal{X}}\exp\bigg\{\sum_{k=1}^d t_k(x)\theta_k \bigg\}.
\end{equation*}
\end{definition}
The parameter $\bm{\theta}$ of an exponential family is called a natural parameter.
The parameterization with $\{\eta_k:=\mathbb{E}[t_k(X)]\}_{k=1,\dotsc,d}$ is 
also useful, and the parameter is called an expectation parameter.

\begin{example}[Multinomial distribution]\label{exm:multi}
The family of distributions on a finite set $\mathcal{X}=\{0,1,\dotsc,q-1\}$ can be regarded as
the exponential family with a sufficient statistic $\{t_{x}(x')=\delta_{x,x'}\}_{x\in\mathcal{X}\setminus 0}$.
In this case, $\eta_x=p(x\mid \bm{\theta})$ for $x\in\mathcal{X}\setminus 0$.
\end{example}

Let $\nabla^2 Q(\bm{\theta})$ be the Hessian matrix of $Q$ at $\bm{\theta}$.
Let $\Var[\bm{t}(X)]$ be the covariance matrix of $\{t_k(X)\}_{k=1,\dotsc,d}$ 
with respect to $p(x\mid\bm{\theta})$.
Then, it is easy to see that $\Var[\bm{t}(X)]=\mathcal{J}(\bm{\theta})$.

\begin{lemma}\label{lem:fisher}
It holds
$\mathcal{J}(\bm{\theta}) = 
\mathcal{J}(\bm{\eta})^{-1}$
where $\mathcal{J}(\bm{\theta})$ and $\mathcal{J}(\bm{\eta})$ 
are the Fisher information matrices with respect to the natural parameter $\bm{\theta}$
and the expectation parameter $\bm{\eta}$.
Furthermore, it holds
$-\nabla^2 H(\bm{\eta}) = \mathcal{J}(\bm{\eta})$ and
$\nabla^2 \psi(\bm{\theta}) = \mathcal{J}(\bm{\theta})$
where $H(\bm{\eta}):=-\sum_{x\in\mathcal{X}} p(x\mid\bm{\eta})\log p(x\mid\bm{\eta})$
is the Shannon entropy.
\end{lemma}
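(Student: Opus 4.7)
The plan is to obtain all three identities from Legendre duality between the natural and expectation parameters, together with standard cumulant-generating-function computations.

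First, I would check $\nabla^2 \psi(\bm{\theta}) = \mathcal{J}(\bm{\theta})$ by direct differentiation. One derivative of the explicit formula for $\psi$ gives $\partial \psi/\partial \theta_k = \mathbb{E}[t_k(X)] = \eta_k$; a second derivative yields $\partial^2 \psi/\partial \theta_k\,\partial \theta_l = \mathbb{E}[t_k(X)t_l(X)] - \eta_k \eta_l = \Cov[t_k(X), t_l(X)]$. The paper has already observed that this covariance equals $\mathcal{J}(\bm{\theta})_{k,l}$, so the identity follows. In particular, $\mathcal{J}(\bm{\theta})$ is positive semidefinite, and under the usual regularity assumption that the sufficient statistics are affinely independent it is strictly positive definite, so the map $\bm{\theta}\mapsto \bm{\eta}=\nabla\psi(\bm{\theta})$ is a local diffeomorphism and reparameterization by $\bm{\eta}$ is legitimate.

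Next, I would establish the dual formula $-H(\bm{\eta}) = \langle \bm{\theta}(\bm{\eta}), \bm{\eta}\rangle - \psi(\bm{\theta}(\bm{\eta}))$ by substituting the exponential form of $p(x\mid\bm{\theta})$ into the definition of $H$. Differentiating with respect to $\bm{\eta}$, the chain-rule terms involving $\partial\bm{\theta}/\partial\bm{\eta}$ cancel (because $\nabla\psi(\bm{\theta})=\bm{\eta}$), leaving $\nabla(-H)(\bm{\eta}) = \bm{\theta}(\bm{\eta})$. A second differentiation then gives
\[
-\nabla^2 H(\bm{\eta}) \;=\; \frac{\partial \bm{\theta}}{\partial \bm{\eta}} \;=\; \Bigl(\frac{\partial \bm{\eta}}{\partial \bm{\theta}}\Bigr)^{-1} \;=\; \bigl(\nabla^2 \psi(\bm{\theta})\bigr)^{-1} \;=\; \mathcal{J}(\bm{\theta})^{-1}.
\]

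Finally, the change-of-variable rule for Fisher information (a direct consequence of applying the chain rule inside the defining expectation) gives $\mathcal{J}(\bm{\eta}) = (\partial\bm{\theta}/\partial\bm{\eta})^T \mathcal{J}(\bm{\theta})\,(\partial\bm{\theta}/\partial\bm{\eta})$. Substituting $\partial\bm{\theta}/\partial\bm{\eta} = \mathcal{J}(\bm{\theta})^{-1}$ and using symmetry of $\mathcal{J}(\bm{\theta})$, this collapses to $\mathcal{J}(\bm{\eta}) = \mathcal{J}(\bm{\theta})^{-1}$. Combined with the Hessian computation above, this simultaneously yields $-\nabla^2 H(\bm{\eta}) = \mathcal{J}(\bm{\eta})$ and the inversion relation between the two Fisher matrices.

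I do not expect a genuine obstacle: the whole argument is a textbook Legendre-duality calculation for exponential families. The only subtle point is justifying that $\bm{\theta}\mapsto\bm{\eta}$ is invertible, which reduces to non-degeneracy of $\mathcal{J}(\bm{\theta})$; this is implicit in treating $\bm{\eta}$ as a bona fide parameterization, and the lemma's conclusion $\mathcal{J}(\bm{\eta}) = \mathcal{J}(\bm{\theta})^{-1}$ in any case presumes invertibility.
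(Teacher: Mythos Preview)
Your argument is correct and is the standard Legendre-duality derivation for exponential families. The paper, however, does not supply a proof of this lemma at all: it is stated as a known fact (the text immediately preceding it already notes $\Var[\bm{t}(X)]=\mathcal{J}(\bm{\theta})$, and the section then moves on without any \texttt{proof} environment). So there is nothing to compare against; your proposal simply fills in the omitted textbook computation, and it does so correctly.
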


\section{Characterization of the Bethe free energy using the loop calculus}
\subsection{Linear transforms of the factor functions}
In this section, a characterization of the Bethe free energy is described which is obtained by Chernyak and Chertkov~\cite{chernyak2007loop}.
Assume that a function $f_a(\bm{x}):\mathcal{X}^{d_a}\to\mathbb{R}_{\ge 0}$ is represented by using
$\hat{f}_a: \mathcal{Y}^{d_a}\to\mathbb{R}_{\ge 0}$ and $\phi_{i,a}:\mathcal{X}\times\mathcal{Y}\to\mathbb{R}_{\ge 0}$ as 
\begin{equation}
f_a(\bm{x}_{\partial a})
= \sum_{\bm{y}_{\partial a}\in\mathcal{Y}^{d_a}}
\hat{f}_a(\bm{y}_{\partial a}) \prod_{i\in\partial a} \phi_{i,a}(x_i, y_i)
\label{eq:yrd}
\end{equation}
for some set $\mathcal{Y}$.
Let $\delta(x,z)$ be a function which takes 1 if $x=z$ and takes 0 otherwise.
When $\mathcal{Y}=\mathcal{X}$, this representation is obtained by letting
\begin{equation*}
\hat{f}_a(\bm{y}_{\partial a})
= \sum_{\bm{x}_{\partial a}\in\mathcal{X}^{d_a}}
f_a(\bm{x}_{\partial a}) \prod_{i\in\partial a} \hat{\phi}_{i,a}(y_i, x_i)
\end{equation*}
for $\{(\phi_{i,a},\hat{\phi}_{i,a})\}_{i\in\partial a}$ satisfying
\begin{align*}
\sum_{y\in\mathcal{X}} \phi_{i,a}(x,y)\hat{\phi}_{i,a}(y,z) &= \delta(x,z)
\end{align*}
or equivalently
\begin{align}
\sum_{x\in\mathcal{X}} \hat{\phi}_{i,a}(y,x)\phi_{i,a}(x,w) &= \delta(y,w).
\label{eq:inverse1}
\end{align}

The partition function can be rewritten by using the transform~\eqref{eq:yrd} as
\begin{align}
Z 
&= \sum_{\bm{y}\in\mathcal{Y}^{|E|}} \prod_{a\in F} \hat{f}_a(\bm{y}_{\partial a})
\prod_{i\in V} \left(\sum_{x\in\mathcal{X}} \prod_{a\in\partial i} \phi_{i,a}(x,y_{i,a})\right).
\label{eq:expansion}
\end{align}
In this representation, the variables $\bm{y}\in\mathcal{Y}^E$ are associated with edges of the original factor graph.
Both the variable nodes and factor nodes in the original graph can be regarded as factor nodes in the new representation~\eqref{eq:expansion}.
The transform of representation gives a unified way to understand many equations, e.g., the MacWilliams identity~\cite{forney2011partition}.

\subsection{Bethe transforms and characterization of Bethe free energy}
Let $\mathcal{Y}=\mathcal{X}$.
We now consider the following additional conditions on $\{(\phi_{i,a}, \hat{\phi}_{i,a})\}_{(i,a)\in E}$.
For each $i\in V$ and $a\in F$,
\begin{equation}
\begin{split}
\sum_{x\in\mathcal{X}} \prod_{a\in\partial i} \phi_{i,a}(x,y_a) &= 0,\quad \exists! b\in\partial i, y_b\ne 0\\
\sum_{\bm{x}_{\partial a}\in\mathcal{X}^{d_a}}
f_a(\bm{x}_{\partial a}) \prod_{i\in\partial a} \hat{\phi}_{i,a}(y_i, x_i)
&= 0,\quad \exists! j\in\partial a, y_j\ne 0.
\end{split}
\label{eq:condBethe0}
\end{equation}
On these conditions, if the subset $\{(i,a)\in E\mid y_{i,a} \ne 0\}\subseteq E$ of edges generates degree-one variable nodes or degree-one factor nodes,
the term in~\eqref{eq:expansion} corresponding to $\bm{y}$ is zero.
Hence, in~\eqref{eq:expansion}, we only have to take the sum over $\bm{y}\in\mathcal{X}^{|E|}$
satisfying $\{(i,a)\in E\mid y_{i,a}\ne 0\}\in G$ where
$G$ is the set of generalized loops defined as
$G := \left\{E'\subseteq E\mid d_o(E') \ne 1, \forall o\in V\cup F\right\}$.
Here, $d_i(E'):=|\{(i,a)\in E'\mid a\in\partial i\}|$ and
$d_a(E'):=|\{(i,a)\in E'\mid i\in\partial a\}|$.
The conditions~\eqref{eq:inverse1} and \eqref{eq:condBethe0}
are equivalent to the condition~\eqref{eq:inverse1} together with
\begin{equation}
\begin{split}
\hat{\phi}_{i,a}(0,x) &= \frac1{\sum_{x\in\mathcal{X}}\prod_{b\in\partial i} \phi_{i,b}(x,0)} \prod_{b\in\partial i\setminus a} \phi_{i,b}(x,0)
\\
\phi_{i,a}(x,0) &= \frac1{\hat{f}_a(\text{all } 0)}
 \sum_{\substack{\bm{x}_{\partial a}\in\mathcal{X}^{d_a},\\ x_i=x}} f_a(\bm{x}_{\partial a})\prod_{j\in\partial a\setminus i}\hat{\phi}_{j,a}(0,x_j).
\end{split}
\label{eq:condBethe1}
\end{equation}

For $\{(m_{a\to i}, m_{i\to a})\}_{(i,a)\in E}$ which satisfies the BP equations~\eqref{eq:bp}, $\bigl(\phi_{i,a}(x, 0), \hat{\phi}_{i,a}(0, x)\bigr) 
= \bigl(c_{i,a}m_{a\to i}(x), \hat{c}_{i,a}m_{i\to a}(x)\bigr)$ provides a solution of~\eqref{eq:condBethe1}
where $c_{i,a}\hat{c}_{i,a}=1/Z_{i,a}(m_{i\to a}, m_{a\to i})$.
In this case, the contribution of all-zero assignment in~\eqref{eq:expansion} is
the Bethe partition function $Z_{\mathrm{Bethe}}(\{m_{i\to a}\}, \{m_{a\to i}\})$.

For the binary case i.e., $\mathcal{X}=\{0,1\}$,
$\bigl(\phi_{i,a}(x, 1), \hat{\phi}_{i,a}(1, x)\bigr) 
= \bigl((-1)^{\bar{x}}c_{i,a}m_{i\to a}(\bar{x}), (-1)^{\bar{x}}\hat{c}_{i,a}m_{a\to i}(\bar{x})\bigr)$
satisfies~\eqref{eq:inverse1}.
In this case, one obtains the following lemma by substituting
the above values of $\{\phi_{i,a}, \hat{\phi}_{i,a}\}_{(i,a)\in E}$ to~\eqref{eq:expansion}.
Let $\langle\cdot\rangle_{p}$ be an expectation with respect to a probability mass function $p$.
\begin{lemma}[\cite{chertkov2006loop}, \cite{sudderth2008loop}]\label{lem:loop}
Assume that the alphabet is binary, i.e., $\mathcal{X}=\{0,1\}$.
Let $m_i:=\langle X_i\rangle_{b_i}= b_i(1)$.
For any stationary point $(\{b_i\}, \{b_a\})$ of the Bethe free energy,
\begin{equation}
\frac{Z}{Z_\mathrm{Bethe}(\{b_i\},\{b_a\})} = \sum_{E'\subseteq E} \mathcal{Z}(E')
\label{eq:loop}
\end{equation}
where
\begin{align*}
\mathcal{Z}(E')&:=
\prod_{i\in V}\left\langle\left(\frac{X_i-m_i}{\sqrt{\langle(X_i-m_i)^2\rangle_{b_i}}}\right)^{d_i(E')}\right\rangle_{b_i}\\
&\qquad\cdot\prod_{a\in F}\left\langle\prod_{i\in \partial a,\, (i,a)\in E'}\frac{X_i-m_i}{\sqrt{\langle(X_i-m_i)^2\rangle_{b_i}}}\right\rangle_{b_a}.
\end{align*}
\end{lemma}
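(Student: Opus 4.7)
I would prove the lemma by specialising the edge-variable expansion~\eqref{eq:expansion} to $\mathcal{Y}=\mathcal{X}=\{0,1\}$ using the specific pair $(\phi_{i,a},\hat{\phi}_{i,a})$ displayed just before the lemma, and then collecting terms by their support $E':=\{(i,a)\in E:y_{i,a}=1\}$. Configurations $\bm{y}\in\{0,1\}^{|E|}$ are in bijection with subsets of $E$, so~\eqref{eq:expansion} becomes a sum indexed by $E'\subseteq E$; dividing through by the $E'=\emptyset$ contribution, which the excerpt already identifies as $Z_{\mathrm{Bethe}}(\{b_i\},\{b_a\})$, will produce~\eqref{eq:loop}.

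The first block of work is to verify that the chosen pair satisfies the framework assumptions. I check that the orthogonality~\eqref{eq:inverse1} holds with $c_{i,a}\hat{c}_{i,a}=1/Z_{i,a}(m_{i\to a},m_{a\to i})$: the two diagonal entries collapse to $c_{i,a}\hat{c}_{i,a}Z_{i,a}=1$, while the sign $(-1)^{\bar x}$ is precisely what forces the off-diagonal entries to vanish. The Bethe conditions~\eqref{eq:condBethe1} reduce to the BP equations~\eqref{eq:bp}, so~\eqref{eq:condBethe0} is in force; consequently every summand whose support has a degree-one variable or factor node vanishes, and the sum can be extended from $E'\in G$ to all $E'\subseteq E$ without changing its value.

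The main step is the evaluation of a generic $E'$-summand divided by the $E'=\emptyset$ summand; this ratio factorises over nodes. At a variable node $i$, after using $b_i(x)\propto\prod_{a\in\partial i}m_{a\to i}(x)$ and $b_i(x)\propto m_{a\to i}(x)m_{i\to a}(x)$ (both consequences of~\eqref{eq:bp}) to eliminate the messages in favour of $b_i$, I expect every message-valued factor to cancel, leaving $\langle((X_i-m_i)/\sqrt{\langle(X_i-m_i)^2\rangle_{b_i}})^{d_i(E')}\rangle_{b_i}$ multiplied by a product of square-root normalisers indexed by $\partial i\cap E'$. An analogous manipulation at a factor node $a$, starting from $\hat{f}_a(\bm{y}_{\partial a})/\hat{f}_a(\bm 0)$ and using $b_a(\bm{x})\propto f_a(\bm{x})\prod_{i\in\partial a}m_{i\to a}(x_i)$, produces $\langle\prod_{i\in\partial a,\,(i,a)\in E'}(X_i-m_i)/\sqrt{\langle(X_i-m_i)^2\rangle_{b_i}}\rangle_{b_a}$ with the reciprocal normalisers. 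The two collections of square-root factors are indexed by the same set of edges in $E'$ and cancel on multiplication, leaving precisely $\mathcal{Z}(E')$.

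The main obstacle is this last piece of bookkeeping: the sign $(-1)^{\bar x}$, the constants $c_{i,a},\hat{c}_{i,a}$, and the three normalisers $Z_i,Z_a,Z_{i,a}$ must be tracked carefully through the ratio so that only normalized centred moments survive. The cancellation is clean because on the binary alphabet $\{1,(X_i-m_i)/\sqrt{\langle(X_i-m_i)^2\rangle_{b_i}}\}$ is an orthonormal basis of $L^2(b_i)$; this is exactly the structure encoded by the sign-twisted $\phi_{i,a}(x,1)$ given above the lemma, and it forces every residual prefactor to cancel across the bipartite decomposition, yielding~\eqref{eq:loop}.
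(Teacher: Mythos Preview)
Your proposal is correct and follows exactly the paper's approach: the paper states that the lemma is obtained ``by substituting the above values of $\{\phi_{i,a},\hat{\phi}_{i,a}\}_{(i,a)\in E}$ to~\eqref{eq:expansion}'', and your plan is precisely that substitution with the bookkeeping of the constants $c_{i,a},\hat{c}_{i,a}$ and the normalisers $Z_i,Z_a,Z_{i,a}$ made explicit. The orthonormal-basis observation you make at the end is the right way to see why the edge-indexed prefactors cancel, and your remark that the sum can be extended from $E'\in G$ to all $E'\subseteq E$ is also implicit in the paper since $\mathcal{Z}(E')=0$ whenever $E'\notin G$.
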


For non-binary cases, the conditions~\eqref{eq:inverse1} and \eqref{eq:condBethe0}
do not fix $\{(\phi_{i,a}, \hat{\phi}_{i,a})\}_{(i,a)\in E}$ uniquely.
In~\cite{chernyak2007loop}, it is suggested to use loop calculus iteratively for each $\mathcal{Z}(E')$.
In this paper, on the other hand, we propose an explicit choice of $\{\phi_{i,a}, \hat{\phi}_{i,a}\}$.
As shown in Example~\ref{exm:multi}, the family of multinomial distributions can be regarded as an exponential family.
Let $\bm{\theta}_i$ and $\bm{\eta}_i$ be a natural parameter and an expectation parameter of $b_i$, respectively.
Then, $\phi_{i,a}(x, y)$ and $\hat{\phi}_{i,a}(x, y)$ for $x\in\mathcal{X}$ and $y\in \mathcal{X}\setminus 0$ are chosen as
\begin{align*}
\frac{\phi_{i,a}(x, y)}{c_{i,a}m_{a\to i}(x)}
&= \frac{\partial \log b_i(x)}{\partial \eta_{i,y}},&
\frac{\hat{\phi}_{i,a}(x, y)}{\hat{c}_{i,a}m_{i\to a}(x)}
&= \frac{\partial \log b_i(x)}{\partial \theta_{i,y}}.
\end{align*}
The partial derivatives in the first and second equations in the above 
are those with respect to
the coordinate systems $\{\eta_{i,y}\}_{y\in\mathcal{X}\setminus 0}$ and $\{\theta_{i,y}\}_{y\in\mathcal{X}\setminus 0}$, respectively.
One can easily confirm that these $\{\phi_{i,a}, \hat{\phi}_{i,a}\}$ satisfy the condition~\eqref{eq:inverse1} as follows.
For $w\in \mathcal{X}\setminus 0$, it holds
\begin{align*}
\sum_{x\in\mathcal{X}} \hat{\phi}_{i,a}(0,x)\phi_{i,a}(x,w) &=
\sum_{x\in\mathcal{X}} b_i(x)
\frac{\partial \log b_i(x)}{\partial \theta_{i,w}}=0.
\end{align*}
Similarly, $\sum_{x\in\mathcal{X}} \hat{\phi}_{i,a}(y,x)\phi_{i,a}(x,0) = 0$ for any $y\in\mathcal{X}\setminus 0$.
For $y, w\in \mathcal{X}\setminus 0$, it holds
\begin{align*}
\sum_{x\in\mathcal{X}} \hat{\phi}_{i,a}(y,x)\phi_{i,a}(x,w) &=
\sum_{x\in\mathcal{X}} 
b_i(x)
\frac{\partial \log b_i(x)}{\partial \eta_{i,y}}
\frac{\partial \log b_i(x)}{\partial \theta_{i,w}}\\
&=
\sum_{x\in\mathcal{X}} 
\frac{\partial b_i(x)}{\partial \eta_{i,y}}
\left[t_{i, w}(x) - \eta_{i, w}\right]\\
&=
\frac{\partial \eta_{i,w}}{\partial \eta_{i,y}}
-
\eta_{i,w}
\sum_{x\in\mathcal{X}} 
\frac{\partial b_i(x)}{\partial \eta_{i,y}}
 = \delta(y, w).
\end{align*}
Then, one obtains the following theorem.
\begin{theorem}[Loop calculus for non-binary alphabet]\label{thm:non-binary}
For any stationary point $(\{b_i\}, \{b_a\})$ of the Bethe free energy,~\eqref{eq:loop} holds
where
\begin{align*}
\mathcal{Z}(E')&:=\sum_{\bm{y}\in(\mathcal{X}\setminus\{0\})^{|E'|}}
\prod_{i\in V}
\left\langle \prod_{a\in\partial i, (i,a)\in E'} \frac{\partial \log b_i(X_i)}{\partial \eta_{i,y_{i,a}}}\right\rangle_{b_i}\\
&\quad\cdot\prod_{a\in F}
\left\langle \prod_{i\in\partial a, (i,a)\in E'} \frac{\partial \log b_i(X_i)}{\partial \theta_{i,y_{i,a}}}\right\rangle_{b_a}.
\end{align*}
\end{theorem}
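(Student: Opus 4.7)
The plan is to substitute the proposed $\{\phi_{i,a},\hat{\phi}_{i,a}\}$ into the transform representation~\eqref{eq:expansion} and reorganise the resulting sum according to the support $E':=\{(i,a)\in E\mid y_{i,a}\ne 0\}$ of the edge assignment $\bm{y}$. Since the bi-orthogonality~\eqref{eq:inverse1} has already been verified for this choice in the text just above, \eqref{eq:expansion} itself is applicable. What remains is to check, for each $E'\subseteq E$, that the inner sum reproduces $Z_\mathrm{Bethe}\cdot\mathcal{Z}(E')$ of the claimed form, and that $\mathcal{Z}(E')=0$ whenever $E'\notin G$, so that the unrestricted sum $\sum_{E'\subseteq E}$ in~\eqref{eq:loop} is legitimate.

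For the variable-node factor at each $i\in V$, with $\bm{y}$ fixed, I would factor out $c_{i,a}m_{a\to i}(x)$ from every $\phi_{i,a}(x,y_{i,a})$, irrespective of whether $y_{i,a}$ vanishes. The BP consistency $b_i(x)\propto\prod_{a\in\partial i}m_{a\to i}(x)$ arising from~\eqref{eq:biba} and~\eqref{eq:bp} then turns the $x$-sum into
\[
\Bigl(\prod_{a\in\partial i}c_{i,a}\Bigr)Z_i\Bigl\langle\prod_{\substack{a\in\partial i\\(i,a)\in E'}}\frac{\partial\log b_i(X_i)}{\partial\eta_{i,y_{i,a}}}\Bigr\rangle_{b_i}.
\]
An entirely analogous computation on $\hat{f}_a(\bm{y}_{\partial a})$, this time using $b_a(\bm{x}_{\partial a})\propto f_a(\bm{x}_{\partial a})\prod_{i\in\partial a}m_{i\to a}(x_i)$ again from~\eqref{eq:biba}, produces
\[
\Bigl(\prod_{i\in\partial a}\hat{c}_{i,a}\Bigr)Z_a\Bigl\langle\prod_{\substack{i\in\partial a\\(i,a)\in E'}}\frac{\partial\log b_i(X_i)}{\partial\theta_{i,y_{i,a}}}\Bigr\rangle_{b_a}.
\]

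Multiplying over all $i\in V$ and $a\in F$ assembles an overall prefactor $\prod_{(i,a)\in E}c_{i,a}\hat{c}_{i,a}\cdot\prod_iZ_i\cdot\prod_aZ_a$, which equals $Z_\mathrm{Bethe}$ because $c_{i,a}\hat{c}_{i,a}=Z_{i,a}^{-1}$ and $Z_\mathrm{Bethe}=\prod_iZ_i\prod_aZ_a\prod_{(i,a)}Z_{i,a}^{-1}$. Dividing~\eqref{eq:expansion} by $Z_\mathrm{Bethe}$ and grouping $\bm{y}$ by its support $E'$, with nonzero entries ranging freely over $\mathcal{X}\setminus\{0\}$, yields $\sum_{E'\subseteq E}\mathcal{Z}(E')$ exactly in the stated form. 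Furthermore, when $E'$ contains a variable node $i$ of degree one at some edge $(i,b)$, the $b_i$-expectation collapses to $\sum_x\partial b_i(x)/\partial\eta_{i,y_{i,b}}=0$; when $E'$ contains a factor node $a$ of degree one, the $b_a$-expectation reduces via the marginal consistency~\eqref{eq:marginal} to $\sum_zb_j(z)[t_{j,y_{j,a}}(z)-\eta_{j,y_{j,a}}]=0$. These are the explicit realisations of conditions~\eqref{eq:condBethe0} and they justify writing the sum as $\sum_{E'\subseteq E}$.

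The main obstacle is purely bookkeeping: tracking the constants $c_{i,a},\hat{c}_{i,a},Z_i,Z_a,Z_{i,a}$ so that they reassemble cleanly into the single prefactor $Z_\mathrm{Bethe}$, and maintaining the separation between the $y=0$ and $y\ne 0$ components of $\phi_{i,a},\hat{\phi}_{i,a}$ while pushing the ``$y=0$ half'' into the BP-induced equalities $\prod_am_{a\to i}\propto b_i$ and $f_a\prod_im_{i\to a}\propto b_a$. Conceptually no genuinely new idea beyond the binary case of Lemma~\ref{lem:loop} is needed; the whole derivation rests on the fact, already established in the text preceding the statement, that $\{\partial\log b_i/\partial\eta_{i,y}\}_{y\ne 0}$ and $\{\partial\log b_i/\partial\theta_{i,y}\}_{y\ne 0}$ form a bi-orthonormal pair under the $b_i$-inner product, which is exactly~\eqref{eq:inverse1}.
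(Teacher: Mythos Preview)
Your proposal is correct and follows essentially the same route as the paper: the paper's argument consists of specifying the explicit $\{\phi_{i,a},\hat{\phi}_{i,a}\}$, verifying the bi-orthogonality~\eqref{eq:inverse1}, and then declaring that substitution into~\eqref{eq:expansion} yields the theorem. You have simply spelled out the bookkeeping of that substitution---factoring out $c_{i,a}m_{a\to i}$ and $\hat{c}_{i,a}m_{i\to a}$, reassembling the constants into $Z_\mathrm{Bethe}$, and checking the degree-one vanishing---which the paper leaves implicit.
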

This theorem is useful for understanding the approximations to be proposed in Sec.~\ref{sec:new}.

\section{Edge zeta function and the determinant of Hessian of the Bethe free energy}
\subsection{Edge zeta function}
In this section, the edge zeta function and Watanabe-Fukumizu formula are introduced~\cite{watanabe2010thesis}.
\begin{definition}
For $(i,a)\in E$ and $(j, b)\in E$,
$(i\to a)\rightharpoonup (j\to b) \overset{\mathrm{def}}{\iff} j\in\partial a, i\ne j, a\ne b$.
\end{definition}

\begin{definition}[Prime cycle]
The closed walk $e_1\rightharpoonup e_2\rightharpoonup \dotsc \rightharpoonup e_n \rightharpoonup e_1$ is said to be a prime cycle
if and only if it cannot be expressed as a power of another walk.
Prime cycles are identified up to cyclic permutations.
\end{definition}

\begin{definition}[Edge zeta function~\cite{watanabe2010thesis}]
Let $r_{i\to a}$ be a natural number associated with an edge $(i, a)\in E$ and
$u_{(i\to a), (j\to b)}$ be an $r_{i\to a}\times r_{j\to b}$ matrix for $(i\to a)\rightharpoonup (j\to b)$.
Then, the edge zeta function is defined as
\begin{equation*}
\zeta(\bm{u}) := \prod_{\substack{\mathfrak{p}=(e_1\rightharpoonup e_2 \dotsb \rightharpoonup e_n \rightharpoonup e_1)\\ \text{ is a prime cycle}}}
\frac1{\det\left(I_{r_{e_1}}-u_{e_1,e_2}u_{e_2,e_3}\dotsm u_{e_n,e_1}\right)}
\end{equation*}
where $I_r$ is the identity matrix of size $r$.
\end{definition}

If a factor graph includes more than one cycle, the number of prime cycles is infinite.
Hence, it is difficult to evaluate $\zeta(\bm{u})$ from the definition.
The following lemma is generally useful for evaluating $\zeta(\bm{u})$.

\begin{lemma}[Bass's formula]
It holds
$\zeta(\bm{u}) = \det(\mathcal{I}_{|E|}-\mathcal{M}(\bm{u}))^{-1}$
where $\mathcal{I}_{|E|}$ is the identity matrix of size $|E|$ as a block matrix,
and where
$\mathcal{M}(\bm{u})_{e, e'} := u_{e, e'}$ if $e\rightharpoonup e'$
and 
$\mathcal{M}(\bm{u})_{e, e'} := 0$ otherwise.
\end{lemma}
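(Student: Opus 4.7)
The plan is to take logarithms of both sides and match the resulting power series term by term. Using the standard identity $-\log\det(I-A)=\sum_{k\ge 1}\mathrm{tr}(A^k)/k$ applied to each factor of $\zeta(\bm u)$ and to $\det(\mathcal{I}_{|E|}-\mathcal{M}(\bm u))$, Bass's formula is equivalent to
\begin{equation*}
\sum_{m\ge 1}\frac{1}{m}\mathrm{tr}\bigl(\mathcal{M}(\bm u)^m\bigr)
=\sum_{\mathfrak{p}}\sum_{k\ge 1}\frac{1}{k}\mathrm{tr}\bigl((u_{e_1,e_2}u_{e_2,e_3}\cdots u_{e_n,e_1})^k\bigr),
\end{equation*}
where the right-hand outer sum runs over prime cycles $\mathfrak{p}=(e_1\rightharpoonup e_2\rightharpoonup\cdots\rightharpoonup e_n\rightharpoonup e_1)$ up to cyclic rotation. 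It suffices to prove this as a formal identity in the entries of $\bm u$ (or as an analytic identity for $\bm u$ in a neighborhood of $0$ where both series converge absolutely), since both sides of Bass's formula are polynomial/rational in the entries and agree on an open set.

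First, expanding the block matrix product directly gives
\begin{equation*}
\mathrm{tr}\bigl(\mathcal{M}(\bm u)^m\bigr)=\sum_{e_1\rightharpoonup e_2\rightharpoonup\cdots\rightharpoonup e_m\rightharpoonup e_1}\mathrm{tr}(u_{e_1,e_2}u_{e_2,e_3}\cdots u_{e_m,e_1}),
\end{equation*}
the sum being over all rooted closed walks of length $m$ in the directed graph on $E$ whose arrows are given by the relation $\rightharpoonup$. Next, I would classify these rooted closed walks via the elementary fact that any rooted closed walk is uniquely the $k$-fold repetition of a primitive rooted closed walk of some length $n$ with $nk=m$. A prime cycle of length $n$ is, by definition, an equivalence class of primitive rooted closed walks of length $n$ under cyclic rotation, and by primitivity such a class contains exactly $n$ distinct rooted representatives. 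By cyclicity of the trace, each of these $n$ representatives contributes the same value $\mathrm{tr}\bigl((u_{e_1,e_2}\cdots u_{e_n,e_1})^k\bigr)$.

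Assembling the bookkeeping, the contribution of a prime $\mathfrak{p}$ of length $n$ to $\sum_{m}\mathrm{tr}(\mathcal{M}^m)/m$ is $\sum_{k\ge 1}(n/(nk))\,\mathrm{tr}\bigl((u_{e_1,e_2}\cdots u_{e_n,e_1})^k\bigr)=\sum_{k\ge 1}\mathrm{tr}\bigl((u_{e_1,e_2}\cdots u_{e_n,e_1})^k\bigr)/k$, which matches the right-hand side above. Exponentiating then gives $\zeta(\bm u)=\det(\mathcal{I}_{|E|}-\mathcal{M}(\bm u))^{-1}$.

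The main obstacle is the combinatorial step: establishing the unique decomposition of a rooted closed walk into a primitive period raised to a repetition count, and verifying that the resulting equivalence class has exactly $n$ rootings (not $nk$). This counting is the combinatorial heart of the formula and is the only step where one can easily introduce an off-by-a-factor error; once it is settled, the trace identity and the convergence argument are routine.
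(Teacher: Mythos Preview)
The paper does not actually prove this lemma; it is stated as a known result and cited without argument. Your proof is the standard one and is correct: take logarithms, expand via $-\log\det(I-A)=\sum_{k\ge 1}\mathrm{tr}(A^k)/k$, interpret $\mathrm{tr}(\mathcal{M}(\bm u)^m)$ as a sum over rooted closed walks of length $m$ in the directed graph $(E,\rightharpoonup)$, and then group walks by their underlying prime cycle. The counting you flag as the delicate point is handled correctly: a primitive closed walk of length $n$ has exactly $n$ distinct rootings (any fewer would force a nontrivial cyclic period, contradicting primitivity), and each contributes the same trace by cyclicity, so the factor $n/(nk)=1/k$ falls out as required.

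One small remark on your justification for passing from the local identity to the global one: saying ``both sides are polynomial/rational'' is slightly circular, since rationality of $\zeta(\bm u)$ is precisely the content of the lemma when the graph has infinitely many prime cycles. The clean phrasing is that the formal power series identity (or the analytic identity on a neighborhood of $\bm u=0$ where the infinite product converges absolutely) shows $\zeta(\bm u)$ agrees with the rational function $\det(\mathcal{I}_{|E|}-\mathcal{M}(\bm u))^{-1}$ there, and the latter then furnishes the meromorphic continuation. This is a presentational point, not a gap.
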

Furthermore, the other expression of $\zeta(\bm{u})$ is known on some condition.
\begin{lemma}[Ihara-Bass formula~\cite{watanabe2010thesis}]
Let $r_i$ be a natural number associated with a variable node $i\in V$.
When $u_{(i\to a), (j\to b)}$ is an $r_i\times r_j$ matrix independent of $b$ and denoted by $u_{i\to j}^a$,
\begin{equation*}
\zeta(\bm{u})^{-1} = \det(\mathcal{I}_{N}-\mathcal{D}+\mathcal{W})
\prod_{a \in F} \det(\mathcal{U}^a)
\end{equation*}
where $\mathcal{D}$ is an $N\times N$ block diagonal matrix defined by
$\mathcal{D}_{i,i} := d_i I_{r_i}$, where $\mathcal{U}^a$ is a $d_a\times d_a$ block matrix defined by
$\mathcal{U}^{a}_{i,i} := I_{r_i}$ and
$\mathcal{U}^{a}_{i,j} := u_{i\to j}^a$ for $i\ne j$, 
and where $\mathcal{W}$ is an $N\times N$ block matrix defined by $\mathcal{W}_{i,j}:= \sum_{a: \{i,j\}\subseteq\partial a} w_{i\to j}^a$.
Here, $w_{i\to j}^a := ({\mathcal{U}^{a}}^{-1})_{i,j}$.
\end{lemma}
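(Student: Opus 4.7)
The plan is to use Bass's formula to convert the claim into the determinantal identity
$$\det(\mathcal{I}_{|E|}-\mathcal{M}(\bm{u})) = \det(\mathcal{I}_N-\mathcal{D}+\mathcal{W})\prod_{a\in F}\det(\mathcal{U}^a),$$
and then to prove this identity by isolating the factor-local data into a block-diagonal $|E|\times|E|$ piece while routing the remaining ``inter-factor'' interaction through the (smaller) space of variable nodes. Sylvester's determinant identity $\det(I-AB)=\det(I-BA)$ will then collapse the ambient dimension from $\sum_{e}r_e$ to $\sum_{i}r_i$.

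The key step is to recognize that, under the hypothesis of the lemma, the only obstruction to a rank-$\sum_i r_i$ factorization of $\mathcal{M}(\bm{u})$ through the variable nodes is the non-backtracking constraint $a\neq b$ in the definition of $\rightharpoonup$. I would therefore split $\mathcal{M}(\bm{u}) = \mathcal{M}_0 - \mathcal{M}'$, where $\mathcal{M}_0$ drops the constraint $a\neq b$ (keeping $j\in\partial a$ and $i\neq j$) and $\mathcal{M}'$ is the correction supported on $a=b$. An immediate inspection gives $\mathcal{I}_{|E|}+\mathcal{M}' = \mathbb{U}$, the $|E|\times|E|$ block-diagonal matrix whose $(a,a)$-block is $\mathcal{U}^a$, and $\mathcal{M}_0=\Phi\Psi$, where $\Phi$ is the $|E|\times N$ block matrix with $\Phi_{(i,a),j} = u_{i\to j}^a$ when $j\in\partial a$ and $i\neq j$ (and zero otherwise), and $\Psi$ is the $N\times|E|$ block matrix with $\Psi_{j,(j',b)} = I_{r_j}\,[j=j']$. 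Consequently $\mathcal{I}_{|E|}-\mathcal{M}(\bm{u}) = \mathbb{U}-\Phi\Psi$, and factoring out $\mathbb{U}$ and invoking Sylvester's identity yields
$$\det(\mathcal{I}_{|E|}-\mathcal{M}(\bm{u})) = \Bigl(\prod_{a\in F}\det(\mathcal{U}^a)\Bigr)\det(\mathcal{I}_N-\Psi\mathbb{U}^{-1}\Phi).$$

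It then remains to identify $\Psi\mathbb{U}^{-1}\Phi$ with $\mathcal{D}-\mathcal{W}$. Because $\mathbb{U}^{-1}$ is block-diagonal over $F$, this reduces to a per-factor evaluation of $\sum_{j'\in\partial a,\,j'\neq j}(\mathcal{U}^a)^{-1}_{i,j'}\,u_{j'\to j}^a$ for each $a$ containing both $i$ and $j$, which I would compute by completing the sum into the full matrix product $(\mathcal{U}^a)^{-1}\mathcal{U}^a$ (contributing $\delta_{ij}I_{r_i}$) and subtracting the omitted $j'=j$ term, which is exactly $w_{i\to j}^a$. Summing over $a$ with $\{i,j\}\subseteq\partial a$ then assembles the diagonal entries $d_iI_{r_i}-\mathcal{W}_{i,i}$ and the off-diagonal entries $-\mathcal{W}_{i,j}$, giving $\Psi\mathbb{U}^{-1}\Phi = \mathcal{D}-\mathcal{W}$. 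The main obstacle I anticipate is purely the bookkeeping in this last step: one must keep careful track of the exclusion $j'\neq j$ entering through $\Phi$, and separate the diagonal part (which assembles into $\mathcal{D}$) from the off-diagonal part (which assembles into $-\mathcal{W}$). The conceptual content of the proof is concentrated in the single identification $\mathcal{I}_{|E|}-\mathcal{M}(\bm{u}) = \mathbb{U}-\Phi\Psi$; everything else is routine linear algebra.
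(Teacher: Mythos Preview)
Your argument is correct. The factorization $\mathcal{I}_{|E|}-\mathcal{M}(\bm{u})=\mathbb{U}-\Phi\Psi$ is exactly right: the correction $\mathcal{M}'$ supported on $a=b$ does assemble into the block-diagonal $\mathbb{U}$ with blocks $\mathcal{U}^a$, and the unconstrained $\mathcal{M}_0$ factors through the variable nodes via your $\Phi$ and $\Psi$. The completion-and-subtraction step for $\Psi\mathbb{U}^{-1}\Phi$ is also carried out correctly, including the diagonal case $i=j$, where the sum over $a\in\partial i$ produces $d_iI_{r_i}-\mathcal{W}_{i,i}$.

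There is, however, nothing in the paper to compare against: the lemma is quoted from Watanabe's thesis and stated without proof. Your derivation is the standard Bass-style argument (add-back-the-backtracking, factor out the block-diagonal piece, apply the Weinstein--Aronszajn/Sylvester identity to pass from the edge space to the vertex space), and it goes through cleanly in this block-matrix setting. The only implicit hypothesis you are using beyond the statement is the invertibility of each $\mathcal{U}^a$, but the paper already assumes this when it defines $w_{i\to j}^a:=(\mathcal{U}^a)^{-1}_{i,j}$; for the formal identity one may alternatively clear denominators and argue by polynomial identity in the entries of $\bm{u}$.
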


\subsection{Determinant of Hessian of the Bethe free energy}
In this subsection, $\{b_i\}$ and $\{b_a\}$ are assumed to be members of an arbitrary fixed parametric family of distributions. 
The alphabet $\mathcal{X}$ is not necessarily finite.
For $i\in V$, $b_i$ has a parameter $\bm{\eta}_i$.
For $a\in F$, $b_a$ has a parameter $\bm{\eta}_a=(\bm{\eta}_{\langle a\rangle}, (\bm{\eta}_i)_{i\in\partial a})$.
The condition~\eqref{eq:marginal} is assumed to be satisfied for any coordinate $(\{\bm{\eta}_i\}, \{\bm{\eta}_{\langle a\rangle}\})$.
In the following, a parameter $\bm{\eta}$ is denoted by the normal font $\eta$ for the simplicity.
For $i\in V$ and $a\in F$, let $H_i$ and $H_a$ be the Shannon entropy of $b_i$ and $b_a$, respectively.
The notation $B\succ 0$ means that a matrix $B$ is positive definite.
\begin{lemma}
For $(\{\eta_i\}, \{\eta_{\langle a\rangle}\})$ satisfying
\begin{align*}
\frac{\partial H_i}{\partial \eta_i \partial \eta_i} &\succ 0,\quad\forall i\in V&
\frac{\partial H_a}{\partial \eta_{\langle a\rangle} \partial \eta_{\langle a\rangle}} &\succ 0,\quad\forall a\in F
\end{align*}
it holds that
\begin{align*}
&\det(\nabla^2 F_\mathrm{Bethe}(\{\eta_i\}, \{\eta_{\langle a\rangle}\})) \\
&=
\prod_{i\in V} \det\Big(\frac{\partial H_i}{\partial \eta_i \partial \eta_i}\Big)
\prod_{a\in F} \det\Big(\frac{\partial H_a}{\partial \eta_{\langle a\rangle} \partial \eta_{\langle a\rangle}}\Big)
\det\big(\mathcal{I}_{N} - \mathcal{D} + \mathcal{G}\big)
\end{align*}
where
\begin{align*}
&\mathcal{G}_{i,j} :=
\bigg(\frac{\partial H_i}{\partial \eta_i\partial \eta_i}\bigg)^{-\frac12}
 \bigg[\sum_{a\in \partial i\cap \partial j}
\bigg(
\frac{\partial H_a}{\partial \eta_i\partial \eta_j}
\\
&-
 \frac{\partial H_a}{\partial \eta_i\partial \eta_{\langle a\rangle}}
\bigg(\frac{\partial H_a}{\partial \eta_{\langle a\rangle}\partial \eta_{\langle a\rangle}}\bigg)^{-1}
 \frac{\partial H_a}{\partial \eta_{\langle a\rangle}\partial \eta_j}
\bigg)\bigg]
\bigg(\frac{\partial H_j}{\partial \eta_j\partial \eta_j}\bigg)^{-\frac12}.
\end{align*}
\end{lemma}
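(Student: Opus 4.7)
The plan is to view $\nabla^2 F_\mathrm{Bethe}$ as a $2\times 2$ block matrix induced by the splitting of coordinates into $\{\eta_i\}_{i\in V}$ and $\{\eta_{\langle a\rangle}\}_{a\in F}$, eliminate the factor-internal block via a Schur complement, and then bring the remaining variable-side block into the form $\mathcal{I}_N-\mathcal{D}+\mathcal{G}$ by a block-diagonal congruence. The appearance of $\mathcal{G}$ is not a coincidence: the Schur complement on each factor $a$ produces precisely the Woodbury-type expression inside the bracket in the definition of $\mathcal{G}_{i,j}$, which is exactly what one expects in view of the Ihara--Bass formula stated earlier.

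First I would observe that, in the multinomial (indicator-basis expectation) parameterization inherited from Example~\ref{exm:multi}, each $b_a(\bm{x}_{\partial a})$ is affine in $\eta_a$, so $\mathcal{U}_\mathrm{Bethe}$ is affine and contributes nothing to the Hessian; hence $\nabla^2 F_\mathrm{Bethe}=-\nabla^2\mathcal{H}_\mathrm{Bethe}$ with $\mathcal{H}_\mathrm{Bethe}=\sum_a H_a-\sum_i(d_i-1)H_i$. Ordering the coordinates as $(\{\eta_i\},\{\eta_{\langle a\rangle}\})$, the Hessian takes the block form
\[
\nabla^2 F_\mathrm{Bethe}=\begin{pmatrix}P & Q\\ Q^{\top} & R\end{pmatrix},
\]
in which $R$ is block-diagonal with $a$-th block $-\partial H_a/\partial\eta_{\langle a\rangle}\partial\eta_{\langle a\rangle}$, the cross-block satisfies $Q_{i,a}=-\partial H_a/\partial\eta_i\partial\eta_{\langle a\rangle}$ when $i\in\partial a$, the off-diagonal $(i,j)$-block of $P$ is $-\sum_{a\in\partial i\cap\partial j}\partial H_a/\partial\eta_i\partial\eta_j$, and the diagonal $(i,i)$-block is $(d_i-1)\,\partial H_i/\partial\eta_i\partial\eta_i-\sum_{a\in\partial i}\partial H_a/\partial\eta_i\partial\eta_i$.

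Next, the positivity hypothesis makes $R$ nonsingular, so
\[
\det\nabla^2 F_\mathrm{Bethe}=\det R\cdot\det\!\bigl(P-QR^{-1}Q^{\top}\bigr).
\]
The factor $\det R$ immediately yields $\prod_a\det(\partial H_a/\partial\eta_{\langle a\rangle}\partial\eta_{\langle a\rangle})$, and a direct expansion of $P-QR^{-1}Q^{\top}$ groups the contributions factor by factor: each $a\in\partial i\cap\partial j$ contributes $\partial H_a/\partial\eta_i\partial\eta_j-(\partial H_a/\partial\eta_i\partial\eta_{\langle a\rangle})(\partial H_a/\partial\eta_{\langle a\rangle}^2)^{-1}(\partial H_a/\partial\eta_{\langle a\rangle}\partial\eta_j)$ to the $(i,j)$-block, matching exactly the bracketed expression in $\mathcal{G}_{i,j}$, while the diagonal picks up the additional term $(d_i-1)\,\partial H_i/\partial\eta_i\partial\eta_i$ coming from the variable-node entropy.

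Finally, I would apply the congruence by the block-diagonal matrix whose $i$-th block is $(\partial H_i/\partial\eta_i\partial\eta_i)^{-1/2}$. Since $A_i^{-1/2}\bigl((d_i-1)A_i\bigr)A_i^{-1/2}=(d_i-1)I_{r_i}$, the congruence transforms the diagonal part into $\mathcal{D}-\mathcal{I}_N$ and the off-diagonal part into $\mathcal{G}$, identifying the conjugated matrix with $\mathcal{I}_N-\mathcal{D}+\mathcal{G}$ (up to an overall sign), and pulls out the last needed factor $\prod_i\det(\partial H_i/\partial\eta_i\partial\eta_i)$. The principal obstacle is meticulous sign bookkeeping: Shannon entropy is concave in the expectation parameter, so in the standard convention $\partial^2 H/\partial\eta^2$ is negative definite, and one must verify that the $(-1)$ powers arising from $\det R$ on the $|F|$ factor blocks, from the triple sign flip inside the Schur complement, and from the overall sign of the variable-side reduction cancel in pairs so as to leave no surviving $(-1)^\#$ in the stated formula---equivalently, that the information-geometric convention in which $\partial H/\partial\eta\partial\eta$ stands for the (positive) Fisher information metric is internally consistent with the positive-definiteness hypothesis. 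A secondary, milder obstacle is justifying that $\mathcal{U}_\mathrm{Bethe}$ contributes zero Hessian beyond the multinomial case, which was tacitly used in the reduction to the entropy terms.
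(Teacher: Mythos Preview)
Your proposal is correct and follows essentially the same route as the paper: compute the block Hessian entries, eliminate the factor-internal coordinates by a Schur complement, and normalize the remaining variable-side block by the $(\partial H_i/\partial\eta_i\partial\eta_i)^{1/2}$ congruence. The only cosmetic difference is the order of the two operations---the paper first performs the congruence by the full block-diagonal matrix $\mathcal V=\mathrm{diag}\bigl(\{\partial H_i/\partial\eta_i^2\},\{\partial H_a/\partial\eta_{\langle a\rangle}^2\}\bigr)$ (so the factor--factor block becomes the identity) and only then takes the Schur complement, whereas you Schur first and normalize afterwards; your sign worries are exactly the notational issue that the paper's hypothesis $\partial H/\partial\eta\partial\eta\succ0$ implicitly adopts the Fisher-information sign convention $-\nabla^2 H$.
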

\begin{proof}
It is easy to see that
\begin{align*}
\frac{\partial F_\mathrm{Bethe}}{\partial \eta_i\partial \eta_j} &= 
\sum_{a\in \partial i \cap \partial j} \frac{\partial H_a}{\partial \eta_i\partial \eta_j}
- \delta_{i,j}(d_i-1) \frac{\partial H_i}{\partial \eta_i\partial \eta_i}\\
\frac{\partial F_\mathrm{Bethe}}{\partial \eta_{\langle a\rangle}\partial \eta_{\langle b\rangle}} &= 
\delta_{a,b} \frac{\partial H_a}{\partial \eta_{\langle a\rangle}\partial \eta_{\langle b\rangle}}, \qquad
\frac{\partial F_\mathrm{Bethe}}{\partial \eta_i\partial \eta_{\langle a\rangle}} = 
\frac{\partial H_a}{\partial \eta_i \partial \eta_{\langle a\rangle}}.
\end{align*}
Let $\mathcal{V}$ be a block diagonal matrix defined by
\begin{align*}
\mathcal{V}_{i,i} &:= \frac{\partial H_i}{\partial \eta_i\partial \eta_i},&
\mathcal{V}_{a,a} &:= \frac{\partial H_a}{\partial \eta_{\langle a\rangle}\partial \eta_{\langle a\rangle}}
\end{align*}
and $\mathcal{C}:= \nabla^2 F_\mathrm{Bethe}(\{\eta_i\}, \{\eta_{\langle a\rangle}\}) - \mathcal{V}$.
Then, one obtains
\begin{equation*}
\nabla^2 F_\mathrm{Bethe}(\{\eta_i\}, \{\eta_{\langle a\rangle}\})
= \mathcal{V}^{\frac12}(\mathcal{I}_{N+|F|} + \mathcal{V}^{-\frac12} C\mathcal{V}^{-\frac12})\mathcal{V}^{\frac12}.
\end{equation*}
For $\mathcal{F}:=\mathcal{V}^{-\frac12} C\mathcal{V}^{-\frac12}$, it holds that
\begin{align*}
\mathcal{F}_{i,j} &= \mathcal{V}_{i,i}^{-\frac12}\sum_{a\in\partial i \cap \partial j} 
\frac{\partial H_a}{\partial \eta_i\partial \eta_j}\mathcal{V}_{j,j}^{-\frac12} - \delta_{i,j} d_i I_{r_i},\hspace{3em}
\mathcal{F}_{a,b} = 0\\
\mathcal{F}_{i,a} &= 
\mathcal{V}_{i,i}^{-\frac12} \frac{\partial H_a}{\partial \eta_i\partial \eta_{\langle a\rangle}}\mathcal{V}_{a,a}^{-\frac12},
\hspace{2.5em}
\mathcal{F}_{a,i} = 
\mathcal{V}_{a,a}^{-\frac12} \frac{\partial H_a}{\partial \eta_{\langle a\rangle}\partial \eta_i}\mathcal{V}_{i,i}^{-\frac12}.
\end{align*}
From
$\det(\nabla^2 F_\mathrm{Bethe}(\{\eta_i\}, \{\eta_{\langle a\rangle}\})) = \det(\mathcal{V}) \det(\mathcal{I}_{N+|F|} + \mathcal{F})$
and
\begin{equation*}
\det(\mathcal{V})
=\prod_{i\in V} \det\Big(\frac{\partial H_i}{\partial \eta_i \partial \eta_i}\Big)
\prod_{a\in F} \det\Big(\frac{\partial H_a}{\partial \eta_{\langle a\rangle} \partial \eta_{\langle a\rangle}}\Big)
\end{equation*}
we only have to prove
$\det(\mathcal{I}_{N+|F|} + \mathcal{F})
= \det(\mathcal{I}_{N}-\mathcal{D}+\mathcal{G})$.
For $u\times u$, $u\times v$ and $v\times u$ matrices $A$, $B$ and $C$, respectively,
it holds
$
\begin{bmatrix}
A& B\\
C& I_v
\end{bmatrix}
\begin{bmatrix}
I_u& 0\\
-C& I_v
\end{bmatrix}
=
\begin{bmatrix}
A-BC& B\\
0& I_v
\end{bmatrix}
$
and hence
$
\det\bigg(
\begin{bmatrix}
A& B\\
C& I_v
\end{bmatrix}
\bigg)
=\det(A-BC)$.
Therefore,
\begin{align*}
\det(\mathcal{I}_{N+|F|}+\mathcal{F}) &= \det(\mathcal{I}_{N}+\mathcal{F}_\mathrm{VV}-\mathcal{F}_\mathrm{VF}\mathcal{F}_\mathrm{VF}^t)\\
&= \det(\mathcal{I}_{N}-\mathcal{D}+\mathcal{G}).\qedhere
\end{align*}
\end{proof}

For an exponential family, the determinant of Hessian of the Bethe free energy is connected to
the edge zeta function.
\begin{lemma}[Watanabe-Fukumizu formula~\cite{watanabe2010thesis}]
Let $\{\eta_i\}$ and $\{\eta_a\}$ be the expectation parameters for $\{b_i\}$ and $\{b_a\}$, respectively.
Let $\{t_i\}$ and $\{t_a\}$ be the sufficient statistics for $\{b_i\}$ and $\{b_a\}$, respectively.
Then, it holds
\begin{align*}
\zeta(\bm{u})^{-1}&=
\det(\nabla^2 F_\mathrm{Bethe}(\{\eta_i\}, \{\eta_{\langle a\rangle}\}))\\
&\quad\cdot \prod_{i\in V} \det(\Var_{b_i}[t_i(X_i)])^{1-d_i}
\prod_{a\in F} \det(\Var_{b_a}[t_a(X_{\partial a})])^{-1}
\end{align*}
where 
$r_i$ is the number of parameters of $b_i$ for $i\in V$ and where
\begin{align*}
u_{i\to j}^a &= 
\Cor_{b_a}[t_i(X_i), t_j(X_j)]\\
&:=
\Var_{b_i}[t_i(X_i)]^{-\frac12}
\Cov_{b_a}[t_i(X_i), t_j(X_j)]
\Var_{b_j}[t_j(X_j)]^{-\frac12}.
\end{align*}
Here, $\Cov_{b_a}[t_i(X_i), t_j(X_j)]$ is a matrix whose $(k, l)$-element is
\begin{equation*}
\mathbb{E}[(t_{i,k}(X_i) - \mathbb{E}[t_{i,k}(X_i)])(t_{j,l}(X_j)-\mathbb{E}[t_{j.l}(X_j))].
\end{equation*}
\end{lemma}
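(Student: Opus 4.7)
The plan is to combine the preceding lemma with the Ihara--Bass formula, translating every ingredient into covariance matrices via Lemma~\ref{lem:fisher}. Writing $V_i := \Var_{b_i}[t_i(X_i)]$ and $V_a := \Var_{b_a}[t_a(X_{\partial a})]$, Lemma~\ref{lem:fisher} identifies the Hessian of $-H_i$ in $\eta_i$ with $V_i^{-1}$, and the \emph{full} Hessian of $-H_a$ in the extended coordinate $(\eta_{\langle a\rangle},(\eta_i)_{i\in\partial a})$ with $V_a^{-1}$.

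The bulk of the proof is then a Schur-complement calculation used twice. First, $\partial H_a/\partial\eta_{\langle a\rangle}\partial\eta_{\langle a\rangle}$ is merely the $(0,0)$-block of $V_a^{-1}$, so the block-determinant identity yields $\det(\partial H_a/\partial\eta_{\langle a\rangle}\partial\eta_{\langle a\rangle})=\det V_{a,**}/\det V_a$, where $V_{a,**}$ denotes the sub-block of $V_a$ indexed by $(\eta_i)_{i\in\partial a}$. Second, the bracket defining $\mathcal{G}_{i,j}$ is, for each $a\in\partial i\cap\partial j$, the $(i,j)$-block of the Schur complement of $\nabla^2 H_a$ with respect to its $\langle a\rangle$-block; the identity $(V_a^{-1})_{**}-(V_a^{-1})_{*0}\big((V_a^{-1})_{00}\big)^{-1}(V_a^{-1})_{0*}=V_{a,**}^{-1}$ collapses it to $[V_{a,**}^{-1}]_{i,j}$, so that $\mathcal{G}_{i,j}=V_i^{1/2}\big(\sum_{a\in\partial i\cap\partial j}[V_{a,**}^{-1}]_{i,j}\big)V_j^{1/2}$ after the $\mathcal{V}^{-1/2}$ prefactors are absorbed.

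For the Ihara--Bass side, with the prescribed $u_{i\to j}^a=\Cor_{b_a}[t_i,t_j]$ and the marginal-consistency identity $\Cov_{b_a}[t_i,t_i]=V_i$, one has $\mathcal{U}^a=D_a^{-1/2}V_{a,**}D_a^{-1/2}$ with $D_a$ block-diagonal of blocks $\{V_i\}_{i\in\partial a}$. Inverting yields $w_{i\to j}^a=V_i^{1/2}[V_{a,**}^{-1}]_{i,j}V_j^{1/2}$, which matches $\mathcal{G}_{i,j}$ term-by-term, so $\mathcal{G}=\mathcal{W}$; moreover $\det\mathcal{U}^a=\det V_{a,**}/\prod_{i\in\partial a}\det V_i$. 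Substituting the Ihara--Bass identity $\det(\mathcal{I}_N-\mathcal{D}+\mathcal{W})=\zeta(\bm{u})^{-1}\prod_a(\det\mathcal{U}^a)^{-1}$ into the preceding lemma, the $\det V_{a,**}$ factors cancel, and the handshaking identity $\prod_a\prod_{i\in\partial a}\det V_i=\prod_i(\det V_i)^{d_i}$ rearranges the remaining exponents into the claimed product.

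None of the steps is deep; the entire argument is careful block-matrix bookkeeping. The main care needed is tracking the sign convention inherited from the preceding lemma---where the positivity condition $\nabla^2 H\succ 0$ forces ``$H$'' to refer to the negative Shannon entropy---and keeping the orientation of each Schur complement straight, so that the final exponents on $\det V_a$ and $\det V_i$ assemble as stated.
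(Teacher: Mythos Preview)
The paper does not prove this lemma; it is stated with a citation to~\cite{watanabe2010thesis}, so there is no in-paper argument to compare against. Your route---combining the preceding Hessian lemma with the Ihara--Bass formula via Lemma~\ref{lem:fisher}, and using Schur complements twice to identify $\partial^2 H_a/\partial\eta_{\langle a\rangle}^2$ with $(V_a^{-1})_{00}$ and the bracket in $\mathcal{G}_{i,j}$ with $[V_{a,**}^{-1}]_{i,j}$---is precisely the natural one given how the paper has arranged its ingredients, and each of your identifications ($\mathcal{G}=\mathcal{W}$, $\det\mathcal{U}^a=\det V_{a,**}/\prod_{i\in\partial a}\det V_i$, and the handshaking reorganisation) is correct.

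One caution on your last sentence: when the bookkeeping is carried through, the exponent on $\det V_a$ emerges as $+1$, not the $-1$ printed in the statement. The paper's own subsequent lemma on $\mathbb{E}[Z(M)]$---where the analogous factor $\prod_{\bm x\in\mathcal S(f_a)}b_a^*(\bm x)$ (which equals $\det V_a$ for the multinomial family) appears with exponent $+1$ inside the $-1/2$ power---is consistent with your derivation, so this discrepancy is a typographical slip in the lemma statement rather than an error in your argument.
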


\section{The new approximations via asymptotic expansion}\label{sec:new}
\subsection{The series of approximations}
Recently, a new characterization of the Bethe free energy is shown by Vontobel~\cite{vontobel2010counting}.
Let $Z(M)$ be the random variable
corresponding to the partition function of a uniformly chosen random graph cover of the original factor graph
where $M$ is the number of copies of the original graph in the graph cover.
See~\cite{vontobel2010counting} for a detailed definition of a graph cover.
Then, the Bethe free energy appears naturally as follows.
\begin{lemma}[\cite{vontobel2010counting}]
It holds
$\mathbb{E}[Z(M)] = Z_\mathrm{Bethe}^{M + o(M)}$ as $M\to \infty$.
\end{lemma}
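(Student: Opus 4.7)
The plan is to expand $\mathbb{E}[Z(M)]$ combinatorially and identify its exponential rate by the method of types combined with Laplace's principle. A random $M$-cover is specified by independent uniform permutations $\{\sigma_{i,a}\in S_M\}_{(i,a)\in E}$, and a configuration $\tilde{\bm x}$ on the cover assigns values $x_i^{(m)}\in\mathcal{X}$ to each of the $M$ copies of every variable node $i$. I would group configurations by the variable-node types $\bar b_i(x):=|\{m:x_i^{(m)}=x\}|/M$ (with entries in $\tfrac1M\mathbb{Z}_{\ge0}$), and for each factor $a$ by the joint type $\bar b_a(\bm x):=|\{m:\tilde{\bm x}_{\partial a,m}=\bm x\}|/M$ induced by the permutations; by construction $\bar b_a$ is marginally consistent with $\{\bar b_i\}_{i\in\partial a}$ in the sense of \eqref{eq:marginal}.

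First I would establish the combinatorial identity
\begin{equation*}
\Pr\bigl(\bar b_a=b_a\,\big|\,\{\bar b_i\}\bigr)=\frac{M!\,\prod_{i\in\partial a}\prod_{x}(Mb_i(x))!}{(M!)^{d_a}\,\prod_{\bm x}(Mb_a(\bm x))!},
\end{equation*}
by a direct double count of tuples $(\sigma_{i,a})_{i\in\partial a}$ producing joint type $b_a$: one first selects an ordered sequence of outcomes with joint type $b_a$, and then for each $i\in\partial a$ counts the number of $\sigma_{i,a}$ compatible with this sequence and the fixed variable-type profile. Stirling's formula gives $\log\Pr(\bar b_a=b_a\mid\{\bar b_i\})=-M\bigl[\sum_{i\in\partial a}H(b_i)-H(b_a)\bigr]+O(\log M)$. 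Because permutations on different edges are independent, the contributions of distinct factor nodes factorize, and the energy contribution at $a$ is $\prod_{\bm x}f_a(\bm x)^{Mb_a(\bm x)}$. Multiplying by $\prod_{i\in V}\binom{M}{Mb_i(\cdot)}=\exp\{M\sum_iH(b_i)+O(\log M)\}$ for the number of variable-configuration tuples with types $\{b_i\}$, and using $\mathcal{H}_\mathrm{Bethe}=\sum_aH(b_a)-\sum_i(d_i-1)H(b_i)$, the $H(b_i)$ terms recombine to give
\begin{equation*}
\mathbb{E}[Z(M)]=\sum_{(\{b_i\},\{b_a\})}\exp\bigl\{-MF_\mathrm{Bethe}(\{b_i\},\{b_a\})+O(\log M)\bigr\},
\end{equation*}
the sum ranging over locally consistent pseudo-marginals whose entries lie in $\tfrac1M\mathbb{Z}_{\ge0}$.

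Since there are only polynomially many such types in $M$ and $F_\mathrm{Bethe}$ is continuous on the compact polytope of consistent pseudo-marginals, Laplace's principle replaces the sum by its largest term, yielding $\log\mathbb{E}[Z(M)]=-M\min F_\mathrm{Bethe}+o(M)=M\log Z_\mathrm{Bethe}+o(M)$, which is the claim. The main technical obstacle is the clean derivation of the hypergeometric identity above and the uniformity of the Stirling estimates across the type simplex (in particular near the boundary, where some $b_i(x)$ or $b_a(\bm x)$ may be small); once these are in hand, the compactness of the pseudo-marginal polytope and the continuity of $F_\mathrm{Bethe}$ make the final Laplace step routine.
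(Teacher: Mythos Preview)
The paper does not supply its own proof of this lemma; it is quoted as a result from~\cite{vontobel2010counting} and left unproved in the text. Your proposal is therefore not competing with any argument in the paper, but it does faithfully reproduce the method-of-types derivation that underlies Vontobel's result: expand $\mathbb{E}[Z(M)]$ over cover configurations, condition on the empirical single-node and factor-local types, use the permanent/hypergeometric count for the probability of a given joint type under independent uniform permutations, apply Stirling to recover $-M\,F_\mathrm{Bethe}$ as the exponent, and finish with Laplace's principle over the polynomially many types in the local marginal polytope. This is the standard route, and your outline is correct; the caveats you flag (uniformity of the Stirling estimates near the boundary of the type simplex, and the precise bookkeeping in the multinomial identity) are exactly the places where care is needed but present no real obstruction.
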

This result gives a new characterization of the Bethe free energy.
More detailed analysis is obtained as follows.
\begin{lemma}[\cite{mori2012central}]
Let $\mathcal{S}(f)$ be the support of a function $f$.
Let $\mathcal{B}$ be the set of minima of $F_\mathrm{Bethe}(\{b_i\}, \{b_a\})$.
Assume $|\mathcal{B}|<\infty$ and
\begin{align*}
&b^*_i(x)>0,\; \forall x\in\mathcal{X},\; \forall i\in V,\quad
b^*_a(\bm{x})>0,\; \forall\bm{x}\in\mathcal{S}(f_a),\; \forall a\in F\\
&\det\left(\nabla^2F_\mathrm{Bethe}(\{\eta^*_i\}, \{\eta^*_{\langle a\rangle}\})\right) >0
\end{align*}
for all $(\{b^*_i\}, \{b^*_a\})\in\mathcal{B}$.
Then, it holds that as $M\to\infty$
\begin{align*}
&\mathbb{E}[Z(M)] = Z_\mathrm{Bethe}^M\sum_{(\{b^*_i\}, \{b^*_a\})\in\mathcal{B}}
\Bigg(\det\left(\nabla^2F_\mathrm{Bethe}(\{\eta^*_i\}, \{\eta^*_{\langle a\rangle}\})\right)\\
&\quad\cdot \prod_{i\in V}\prod_{x\in\mathcal{X}}b_i^*(x)^{1-d_i}\prod_{a\in F}\prod_{\bm{x}\in\mathcal{S}(f_a)}b_a^*(\bm{x})\Bigg)^{-\frac12}
(1+o(1))\\
&=
Z_\mathrm{Bethe}^M\sum_{(\{b^*_i\}, \{b^*_a\})\in\mathcal{B}} \sqrt{\zeta(\bm{u})}(1+o(1))
\end{align*}
where $u_{i\to j}^a = \Cor_{b^*_a}[t_i(X_i), t_j(X_j)]$
and where 
$\eta_i=\{b_i(x)\}_{x\in\mathcal{X}\setminus 0}$
and
$\eta_{\langle a\rangle}=T_a(\{b_a(\bm{x})\}_{\bm{x}\in\mathcal{S}(f_a)})$.
Here, for $a\in F$, $T_a(\eta)$ denotes an any subset of $\eta$ with which $\{\eta_i\}_{i\in\partial a}$
can be regarded as an expectation parameter for $b_a$.
\end{lemma}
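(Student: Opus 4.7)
The plan is to perform a Laplace-method expansion of the exact combinatorial expression for $\mathbb{E}[Z(M)]$ in terms of empirical marginals on a random $M$-cover, and then to identify the resulting prefactor with $\sqrt{\zeta(\bm{u})}$ through the Watanabe-Fukumizu formula. The starting point, already used in~\cite{vontobel2010counting} to extract the leading exponential rate, is to classify configurations by their single-variable types $b_i(x)\in(1/M)\mathbb{Z}_{\ge 0}$ and their joint types $b_a(\bm{x}_{\partial a})\in(1/M)\mathbb{Z}_{\ge 0}$. Because the edge permutations of a uniformly random $M$-cover are independent and uniform, the expectation factors over nodes and yields an identity of the form
\begin{align*}
\mathbb{E}[Z(M)]
&= \sum_{\{b_i\},\{b_a\}}
 \prod_{a\in F}\prod_{\bm{x}_{\partial a}} f_a(\bm{x}_{\partial a})^{Mb_a(\bm{x}_{\partial a})}\\
&\qquad\cdot \frac{\prod_{a\in F}\binom{M}{\{Mb_a(\bm{x}_{\partial a})\}}}
 {\prod_{i\in V}\binom{M}{\{Mb_i(x)\}}^{d_i-1}},
\end{align*}
summed over the profiles that satisfy~\eqref{eq:marginal}.

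Applying Stirling's formula to every multinomial converts this ratio into $\exp(M\mathcal{H}_\mathrm{Bethe})$ times the polynomial correction $\prod_i\prod_x b_i(x)^{(d_i-1)/2}\prod_a\prod_{\bm{x}\in\mathcal{S}(f_a)}b_a(\bm{x})^{-1/2}$, multiplied by an overall power of $2\pi M$ recording the number of independent type coordinates. Combined with the $f_a$ factors the exponent becomes exactly $-MF_\mathrm{Bethe}$. Under the nondegeneracy assumption the Hessian of $F_\mathrm{Bethe}$ is positive definite at each minimum, so a standard multivariate Laplace expansion on the $(1/M)$-lattice produces, at each $(\{b^*_i\},\{b^*_a\})\in\mathcal{B}$, a Gaussian integral with value $(2\pi/M)^{d/2}/\sqrt{\det(\nabla^2 F_\mathrm{Bethe})}$, where $d$ is the number of free parameters after eliminating~\eqref{eq:marginal}. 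The powers of $2\pi M$ from the Stirling correction and from the Laplace integral cancel via the Euler-type identity relating these coordinate counts, leaving precisely the closed-form prefactor stated in the first line of the lemma.

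The second equality then follows from the Watanabe-Fukumizu formula of the preceding lemma, which expresses $\zeta(\bm{u})^{-1}$ as $\det(\nabla^2F_\mathrm{Bethe})\prod_i\det(\Var_{b^*_i})^{1-d_i}\prod_a\det(\Var_{b^*_a})^{-1}$, once one notes that in the multinomial expectation parameterization of Example~\ref{exm:multi} the covariance determinant coincides with the product of probabilities, $\det(\Var_{b^*_i}[t_i(X_i)])=\prod_{x\in\mathcal{X}}b^*_i(x)$ and $\det(\Var_{b^*_a}[t_a(X_{\partial a})])=\prod_{\bm{x}\in\mathcal{S}(f_a)}b^*_a(\bm{x})$ (a classical identity obtained by row-reducing the centred multinomial covariance matrix). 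Substituting yields exactly $\sqrt{\zeta(\bm{u})}$.

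The principal obstacle is the Laplace step. Beliefs are constrained by~\eqref{eq:marginal} and approach the boundary of the probability simplex whenever $b^*_a$ vanishes off $\mathcal{S}(f_a)$, so one must smoothly parameterize the feasible region in a neighbourhood of each $(\{b^*_i\},\{b^*_a\})\in\mathcal{B}$, control the Stirling remainder uniformly on the $(1/M)$-lattice within that neighbourhood, and show that contributions from type profiles outside any fixed neighbourhood of $\mathcal{B}$ decay exponentially in $M$ in spite of their polynomial-in-$M$ count. The hypotheses of strict positivity of $b^*_i$ and of $b^*_a$ on $\mathcal{S}(f_a)$, finiteness of $\mathcal{B}$, and positive definiteness of $\nabla^2 F_\mathrm{Bethe}$ are precisely what make this standard saddle-point argument rigorous.
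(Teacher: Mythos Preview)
The paper does not prove this lemma; it is quoted from the external reference~\cite{mori2012central} and stated without argument, so there is no in-paper proof to compare against. Your outline is the standard and correct route: write $\mathbb{E}[Z(M)]$ as a sum over type profiles $(\{b_i\},\{b_a\})$ satisfying~\eqref{eq:marginal} with weight given by the ratio of multinomial coefficients, apply Stirling to expose $\exp\{-MF_\mathrm{Bethe}\}$ times the stated $b$-product correction, run a constrained Laplace expansion at each nondegenerate minimum, and then invoke the Watanabe--Fukumizu identity together with the multinomial determinant identity $\det(\Var[t(X)])=\prod_x p(x)$ to rewrite the prefactor as $\sqrt{\zeta(\bm{u})}$. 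This is exactly the mechanism one expects in~\cite{mori2012central}.

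One point worth tightening: the cancellation of the $(2\pi M)$ powers is not merely an ``Euler-type identity'' but a concrete dimension count. The Stirling prefactor contributes $(2\pi M)^{-\frac12\sum_a(|\mathcal{S}(f_a)|-1)+\frac12\sum_i(d_i-1)(|\mathcal{X}|-1)}$, while the Gaussian integral contributes $(2\pi/M)^{d/2}$ with $d=\sum_a(|\mathcal{S}(f_a)|-1)-|E|(|\mathcal{X}|-1)$, the number of free expectation parameters after imposing the $|E|(|\mathcal{X}|-1)$ marginal constraints~\eqref{eq:marginal}. Checking that $\sum_i(d_i-1)(|\mathcal{X}|-1)-|E|(|\mathcal{X}|-1)=-|V|(|\mathcal{X}|-1)$ via $\sum_i d_i=|E|$ is what makes the exponents match; you should state this explicitly rather than appeal to an unnamed identity. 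With that bookkeeping in place your sketch is complete.
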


Furthermore, except for cases including singularity, the expectation of a partition function of a graph cover has the asymptotic expansion
\begin{align*}
\log \mathbb{E}[Z(M)] \sim M \log Z_\mathrm{Bethe} + \sum_{k=0}^\infty \frac{g_k}{\sqrt{M^k}}
\end{align*}
where
$g_0 := \log\sum_{(\{b^*_i\}, \{b^*_a\})\in\mathcal{B}} \sqrt{\zeta(\bm{u})}$
and where $\{g_k\}_{k=1,2,\dotsc,}$ are some constants similarly to the Edgeworth expansion~\cite{chambers1967methods}.
Based on the above divergent series, we demonstrate the following series of approximations by letting $M=1$, which may seem shameful
\begin{align*}
\log Z &\approx \log Z_\mathrm{Bethe}\\
\log Z &\approx \log Z_\mathrm{Bethe} + g_0\\
\log Z &\approx \log Z_\mathrm{Bethe} + g_0 + g_1\\
&\vdots
\end{align*}

\begin{definition}[Asymptotic Bethe approximation]
For $m=0,1,\dotsc$, the asymptotic Bethe approximation of order $m$ is defined as
\begin{equation*}
Z_{\mathrm{AB}}^{(m)} := Z_\mathrm{Bethe}\exp\bigg\{\sum_{k=0}^{m-1} g_k\bigg\}.
\end{equation*}
\end{definition}

\subsection{Asymptotic exactness of the asymptotic Bethe approximation of order 1}
In this subsection, we show cases in which the asymptotic Bethe approximation $Z_{\mathrm{AB}}^{(1)}$ is asymptotically better in some limit than 
the Bethe approximation $Z_\mathrm{Bethe}$.
Let the set $L_2$ of loops be
\begin{align*}
L_2 := \big\{E'\subseteq E\mid \text{$E'$ is connected }, d_o(E') = 0 \text{ or }2,\\
 \forall o\in V\cup F\big\}.
\end{align*}
The set $L_2$ of loops is a subset of the set $G$ of generalized loops.
For the binary case, a contribution $\mathcal{Z}(E')$ in Lemma~\ref{lem:loop} of a loop $E'\in L_2$ is
\begin{align*}
\mathcal{Z}(E')=
\Cor_{b_{a_\ell}}[X_{i_\ell}, X_{i_{1}}]
\prod_{k=1}^{\ell-1}
\Cor_{b_{a_k}}[X_{i_k}, X_{i_{k+1}}]
\end{align*}
where $(i_\ell\to a_\ell)\rightharpoonup(i_1\to a_1)\rightharpoonup(i_2\to a_2)\rightharpoonup \dotsb \rightharpoonup(i_\ell\to a_\ell)$
forms the loop $E'$ and where $\ell:= |E'|$.
In the above, weight $\Cor_{b_a}[X_i, X_j]$ of the connection of edges $(i_k\to a_k)\rightharpoonup (i_{k+1}\to a_{k+1})$
is the same as $u_{i_k\to j_{k+1}}^{a_k}$ in the edge zeta function used for $Z_{\mathrm{AB}}^{(1)}$ for $k=1,\dotsc,\ell$.

For non-binary cases, the partition function of a single-cycle graph can also be calculated easily
by using the well-known method of the transfer matrix.
\begin{lemma}[Transfer matrices]
\begin{equation*}
\sum_{\bm{x}\in\mathcal{X}^N}
f_N(x_N, x_1) \prod_{i=1}^{N-1} f_i(x_i, x_{i+1})
= \mathrm{tr}(F^{(1)} F^{(2)} \dotsm F^{(N)})
\end{equation*}
where $F^{(i)}$ is a $|\mathcal{X}|\times |\mathcal{X}|$ matrix with $F^{(i)}_{x, x'} = f_i(x, x')$.
\end{lemma}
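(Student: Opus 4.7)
The plan is to prove this well-known transfer matrix identity by direct expansion of the matrix product, using the fact that the sum-product structure on the left matches the structure of ordinary matrix multiplication and that closing the chain into a cycle corresponds to taking a trace.

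First, I would unfold the matrix product entrywise. By repeated application of the definition of matrix multiplication, the $(x_1, x_{N+1})$ entry of $F^{(1)} F^{(2)} \dotsm F^{(N)}$ equals
\begin{equation*}
\bigl(F^{(1)} F^{(2)} \dotsm F^{(N)}\bigr)_{x_1, x_{N+1}}
= \sum_{x_2, \dotsc, x_N \in \mathcal{X}} \prod_{i=1}^{N} f_i(x_i, x_{i+1}),
\end{equation*}
which can be verified by induction on $N$: the base case is the definition of $F^{(1)}$, and the inductive step is
$(AF^{(N)})_{x_1, x_{N+1}} = \sum_{x_N} A_{x_1,x_N} f_N(x_N, x_{N+1})$.

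Next, I would take the trace, i.e., restrict to the diagonal $x_{N+1} = x_1$ and sum over $x_1 \in \mathcal{X}$. This yields
\begin{equation*}
\mathrm{tr}\bigl(F^{(1)} F^{(2)} \dotsm F^{(N)}\bigr)
= \sum_{x_1, \dotsc, x_N \in \mathcal{X}} f_N(x_N, x_1) \prod_{i=1}^{N-1} f_i(x_i, x_{i+1}),
\end{equation*}
which is exactly the left-hand side of the claimed identity. There is essentially no obstacle here: the only subtlety is bookkeeping of indices, ensuring that the factor $f_N(x_N, x_1)$ corresponds to closing the chain via the trace rather than to an additional matrix factor. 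Since the identity is entirely formal and holds for arbitrary nonnegative (indeed arbitrary real) functions $f_i$, no positivity, normalization, or probabilistic assumption is needed in the argument.
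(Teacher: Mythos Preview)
Your proof is correct and is the standard argument for this classical identity. The paper itself does not supply a proof of this lemma; it is stated as a well-known result (``the well-known method of the transfer matrix'') and used immediately afterward, so there is nothing further to compare.
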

From this lemma and Lemma~\ref{lem:fisher},
one obtains $\mathcal{Z}(E')$
in Theorem~\ref{thm:non-binary}
for $E'\in L_2$ as
\begin{align}
\mathcal{Z}(E')&=
\mathrm{tr}
\big(
\Cor_{b_{a_1}}[t_{i_1}(X_{i_1}), t_{i_2}(X_{i_2})]
\Cor_{b_{a_2}}[t_{i_2}(X_{i_2}), t_{i_3}(X_{i_3})]\nonumber\\
&\qquad \dotsm
\Cor_{b_{a_{\ell}}}[t_{i_\ell}(X_{i_\ell}), t_{i_1}(X_{i_1})]
\big).
\label{eq:mloop}
\end{align}

First, we consider the simplest non-trivial example, namely, a single-cycle factor graph.
\begin{example}[Single-cycle factor graph]
For a single-cycle graph, the Bethe free energy is convex with respect to the expectation parameters
and hence the stationary point is unique~\cite{watanabe2010thesis}.
For the unique solution $\{b^*_i\}, \{b^*_a\}$, one obtains from Theorem~\ref{thm:non-binary} and~\eqref{eq:mloop} that
\begin{equation*}
\frac{Z}{Z_\mathrm{Bethe}(\{b^*_i\}, \{b^*_a\})} = 1+
\mathrm{tr}(A)
\end{equation*}
where
\begin{align*}
A:=
\Cor_{b^*_{a_1}}[t_1(X_1), t_2(X_2)]
\Cor_{b^*_{a_2}}[t_2(X_2), t_3(X_3)]\\
\dotsm
\Cor_{b^*_{a_N}}[t_N(X_N), t_1(X_1)].
\end{align*}
On the other hand, the square root of the edge zeta function is
\begin{align*}
\sqrt{\zeta(\bm{u})}
&= 
\frac1{\det\big(I_{|\mathcal{X}|-1}- A\big)}.
\end{align*}
From $\det(I_{|\mathcal{X}|-1}-A)=1-\mathrm{tr}(A) + O(\rho(A)^2)$ as $A\to 0$,
where $\rho(A)$ denotes the spectral radius of $A$, 
one obtains the following asymptotic equality
\begin{align*}
\sqrt{\zeta(\bm{u})}
&=
\frac1{1-\mathrm{tr}(A) + O(\rho(A)^2)}\\
&=
1+\mathrm{tr}(A)+ O(\rho(A)^2)
=\frac{Z}{Z_\mathrm{Bethe}}+ O(\rho(A)^2).
\end{align*}
Hence, $\sqrt{\zeta(\bm{u})}$ is an accurate approximation for $Z/Z_\mathrm{Bethe}$
whenever the matrix $A$ is close to zero. 
\hfill\qed
\end{example}
For a factor graph including multiple cycles,
we consider the case that the functions $\{f_a\}$ have a parameter $\beta>0$.
We assume that there is a unique minimum of the Bethe free energy.  
Furthermore, we assume that $Z/Z_\mathrm{Bethe}\to 1$ as $\beta\to 0$, and that
the contribution of loops dominates $Z/Z_\mathrm{Bethe}$, i.e.,
\begin{align}
\frac{Z}{Z_\mathrm{Bethe}}
&= 1 + \sum_{E'\in L_2\setminus \{\varnothing\}} \mathcal{Z}(E')
+ o\bigg(\sum_{E'\in L_2\setminus \{\varnothing\}} \mathcal{Z}(E')\bigg)
\label{eq:asm}
\end{align}
as $\beta\to 0$.
On the other hand, one obtains
\begin{align*}
\sqrt{\zeta(\bm{u})}
&= 1 + \sum_{E'\in L_2\setminus \{\varnothing\}} \mathcal{Z}(E')
+ o\bigg(\sum_{E'\in L_2\setminus \{\varnothing\}} \mathcal{Z}(E')\bigg).
\end{align*}
Hence, the square root of the edge zeta function $\sqrt{\zeta(\bm{u})}$ gives the dominant terms in $Z/Z_\mathrm{Bethe}$
in the limit $\beta\to 0$ for an arbitrary topological graph.
The derivation of the above equation is omitted due to the lack of the space.

\begin{example}[High-temperature expansion for the Ising model]
For $J_{i,j}\in\mathbb{R}$ and $\beta>0$, the partition function of the Ising model is
\begin{equation}\label{eq:ising}
Z = \sum_{\bm{x}\in\{+1,-1\}^N}\exp\bigg\{\beta\bigg(\sum_{(i,j)\in F}J_{i,j}x_ix_j + \sum_{i\in V} h_i x_i\bigg) \bigg\}.
\end{equation}
When $h_i = 0$ for all $i\in V$, 
the set of uniform messages is the trivial solution of the BP equations~\eqref{eq:bp}.
Although it is not the unique solution of~\eqref{eq:bp},
if $\beta$ is sufficiently small, the set of uniform messages is the unique minimum of the Bethe free energy.
The Fourier transform is equivalent to the Bethe transform for the uniform messages.
From Lemma~\ref{lem:loop}, or equivalently from the MacWilliams identity,
one obtains
\begin{align*}
Z
&= 2^N\prod_{(i,j)\in F}\cosh(\beta J_{i,j})
\sum_{\substack{E'\subseteq E,\\ d_i(E') \text{ is even}}}\prod_{(i,j)\in E'}\tanh(\beta J_{i,j}).
\end{align*}
In this case, the Bethe free energy evaluated with the uniform messages is $2^N\prod_{(i,j)\in F} \cosh(\beta J_{i,j})$.
Since~\eqref{eq:asm} is satisfied in this case, $Z_{\mathrm{AB}}^{(1)}$ is a more accurate approximation than $Z_\mathrm{Bethe}$
in the limit $\beta\to 0$.
When $h_i\ne 0$, the uniform messages are not a solution of the BP equations~\eqref{eq:bp}.
In that case, from~\eqref{eq:biba},
$\Cor_{b_a}[X_i, X_j]$ is equal to
\begin{align*}
&\frac{\sinh(2\beta J_{i,j})}{\sqrt{\cosh(2l_i) + \cosh(2\beta J_{i,j})}\sqrt{\cosh(2l_j) + \cosh(2\beta J_{i,j})}}
\end{align*}
where $m_{i\to a}\propto \exp\{l_i\}$ and $m_{j\to a}\propto \exp\{l_j\}$~\cite{watanabe2010thesis}.
Since $|\Cor_{b_a}[X_i, X_j]|$ takes the maximum $|\tanh(\beta J_{i,j})|$ at $l_i=l_j=0$,
as $\beta\to 0$, $\Cor_{b_a}[X_i, X_j]\to 0$ and hence,~\eqref{eq:asm} is satisfied.
On the other hand, as $h_i\to\infty$, it also holds $\Cor_{b_a}[X_i, X_j]\to 0$ since $l_i\to\pm\infty$ as $h_i\to\pm\infty$.
Hence, the condition~\eqref{eq:asm} is also satisfied in the limit $h_i\to\pm\infty$ for all $i\in V$.
\hfill\qed
\end{example}
\begin{remark}
Let us consider the (not bipartite) graph in which a degree-two factor node in the Ising model~\eqref{eq:ising} is replaced by an edge.
Due to Kac and Ward~\cite{kac1952combinatorial},
if the graph is planar and if $h_i=0$ for all $i\in V$, the partition function $Z$ can be calculated in polynomial time from the beautiful equation
$Z=Z_{\mathrm{Bethe}}/\sqrt{\zeta(\bm{u})}$
where $u_{(i\to a), (j\to b)}=\tanh(\beta J_{i,j})\exp\{\sqrt{-1}\gamma_{(i\to a), (j\to b)}/2\}$.
Here, $\gamma_{(i\to a), (j\to b)}$ denotes the angle of the edge connection $(i\to a)\rightharpoonup (j\to b)$.
This equation exists only when the graph is a certain type of planar graph and the magnetic field $h_i$ is zero for all $i\in V$.
\hfill \qed
\end{remark}
Even if each $\Cor_{b_a}[X_i, X_j]$ does not go to 0, if the product of them along a loop goes to 0,
the condition~\eqref{eq:asm} can be satisfied.
Some sparse factor graphs can satisfy this condition in the large-size limit $N\to\infty$.
The justification of the approximation $Z_{\mathrm{AB}}^{(1)}$ for some sparse factor graphs is an open problem.

When the Hessian of the Bethe free energy is not positive definite at some critical temperature $\beta_\mathrm{c}$,
the edge zeta function diverges.
This situation is considered as a finite-size analogue of the second-order phase transition.
Similarly, if the minima of the Bethe free energy discontinuously jumps at $\beta_\mathrm{c}$,
$Z_\mathrm{AB}^{(m)}$ is discontinuous at $\beta_\mathrm{c}$ for $m\ge 1$.
This situation is considered as a finite-size analogue of the first-order phase transition.
In these cases, it is better to consider another limit for $\beta$ around $\beta_\mathrm{c}$,
e.g., $\delta=(\beta-\beta_\mathrm{c})/N^c$ is fixed for some $c>0$~\cite{parisi1993critical}.

For the three-body Ising model without magnetic field, $\sqrt{\zeta(\bm{u})}=1$ for the uniform messages
and hence $Z_\mathrm{AB}^{(1)}=Z_\mathrm{Bethe}$.
We can intuitively guess that $Z_\mathrm{AB}^{(2)}$ takes account of contributions of
the set of connected generalized loops in which all degrees must be at most three
since the derivation of $g_1$ requires derivation of third-order derivatives of the Bethe free energy, which includes
the third-order statistics in Theorem~\ref{thm:non-binary}.
This idea can be generalized to $Z_{\mathrm{AB}}^{(m)}$ for any $m\ge 1$.

\bibliographystyle{IEEEtran}
{\footnotesize
\bibliography{IEEEabrv,ldpc}
}
\thispagestyle{mysimple}

\end{document}